\newtheorem{redrule}{Reduction Rule}	
\newtheorem{theorem}{Theorem}
\newtheorem{observation}{Observation}
\newtheorem{proposition}{Proposition}
\newtheorem{corollary}{Corollary}
\newtheorem{lemma}{Lemma}
\theoremstyle{remark}
\theoremstyle{definition}
\newtheorem{define}{Definition}
\theoremstyle{remark}
\newcommand{\name}[1]{\textsc{#1}}
\newcommand{\jc}{$\theta_c$}
\newcommand{\tw}{{\mathbf{tw}}}
\newcommand{\pmin}{{\sc $p$-min-MSO}}
\newcommand{\h}[1]{\end{document}}
\newcommand{\qel}{\mbox{$q$--Expansion Lemma}}
\newcommand{\fd}{$p$-\name{$\mathcal{F}$-Deletion}}
\newcommand{\tfd}{$p$-\name{\jc{}-Deletion}}
\newcommand{\dhs}{\name{Diamond Hitting Set}}
\title{Hitting forbidden minors: Approximation and Kernelization}
\date{}
 \author{Fedor V. Fomin\thanks{Department of Informatics, University of Bergen, N-5020 Bergen, Norway.  
\texttt{fomin@ii.uib.no}.} 
\and Daniel Lokshtanov\thanks{
University of California, San Diego, 
 La Jolla, CA 92093-0404, 
USA,  \texttt{dlokshtanov@cs.ucsd.edu}}
\and  Neeldhara Misra\thanks{The Institute of Mathematical Sciences, Chennai - 600113, India.  
 \texttt{\{neeldhara|gphilip|saket\}@imsc.res.in}.}
\and \addtocounter{footnote}{-1} Geevarghese Philip\footnotemark 
\and \addtocounter{footnote}{-1} Saket Saurabh\footnotemark}
\begin{document}

\pagebreak 

\maketitle

\begin{abstract}

We study a general class of problems called \fd{}
problems. In an \fd{} problem, we 
are asked whether a subset   of at most $k$ vertices can be deleted from a graph 
$G$ such that the resulting graph does not contain as a minor any graph from the 
family ${\cal F}$ of forbidden minors.
   We obtain a number of algorithmic results on the \fd{}~problem when  $\mathcal{F}$ contains a  planar graph. We give
\begin{itemize}
\setlength{\itemsep}{-2pt}
\item a linear vertex kernel on graphs excluding $t$-claw $K_{1,t}$,  the star with  $t$ leves,  as an induced subgraph, where  $t$ is a fixed integer.
 
\item an approximation algorithm achieving  an approximation ratio of 
$O(\log^{3/2} OPT)$, where $OPT$ is the size of an optimal solution on general undirected graphs. 
\end{itemize}
Finally, we obtain polynomial kernels for the case when  $\cal F$ contains  graph  \jc{} 
as a minor for a fixed integer $c$. The graph \jc{} consists of two
vertices connected by $c$ parallel edges. Even though this may appear to be a very restricted class
of problems it already encompasses well-studied problems such as {\sc Vertex Cover}, {\sc Feedback Vertex Set} and \dhs{}. 
The generic kernelization algorithm is based on a non-trivial application of protrusion techniques, previously  used only  for problems on
topological graph classes.
 
\end{abstract}
 
\section{Introduction}
 Let  $\cal F$ be a finite set of graphs. In an \fd{} problem\footnote{We use prefix $p$  to distinguish the  parameterized version of the problem.}, we are given an $n$-vertex graph
$G$ and an integer $k$ as input, and asked whether at most $k$ vertices
can be deleted from $G$ such that the resulting graph does not contain a graph from   ${\cal
F}$ as a minor.  More precisely the problem is defined as follows.

\begin{center}

\begin{boxedminipage}{.96\textwidth}

 \fd{}

\begin{tabular}{ r l }
\textit{~~~~Instance:} & A graph $G$  and a non-negative integer $k$. \\
\textit{Parameter:} & $k$\\
\textit{Question:} & Does there exist $S \subseteq V(G)$, $|S| \leq k$, \\
~~~~~~~~~~~~~~~~~~ & such that $G \setminus S$ contains no graph from ${\cal F}$ as a minor?
 \\
\end{tabular}

\end{boxedminipage}

\end{center}
We refer to such subset $S$ as $\cal F$-hitting set.   The \fd{}~problem is a generalization of several fundamental problems. For example, when  ${\cal
F}=\{K_2\}$, a complete graph on two vertices, this is the 
{\sc Vertex Cover} problem. When ${\cal
F}=\{C_3\}$,   a cycle on three vertices, this is the 
   {\sc Feedback Vertex Set} problem.  Another famous cases are  ${\cal
F} =\{K_{2,3}, K_4\}$, ${\cal
F} =\{K_{3,3}, K_5\}$ and ${\cal
F} =\{K_{3}, T_2\}$, which correspond to removing vertices to obtain outerplanar graphs, planar 
graphs and graphs of pathwidth one respectively.  Here $K_{i,j}$ is a complete bipartite graph with bipartitions of sizes $i$ and $j$, $K_i$ is a complete graph on $i$ vertices, and $T_2$ is the graph  in the left of Figure~\ref{fig:theta_c}.
In literature these problems are known as   {\sc $p$-Outerplanar Deletion Set}, {\sc $p$-Planar Deletion Set} and 
{\sc $p$-Pathwidth One Deletion Set} respectively. 
 
  \begin{figure}[t]
\begin{center}
\includegraphics[width=0.9\textwidth]{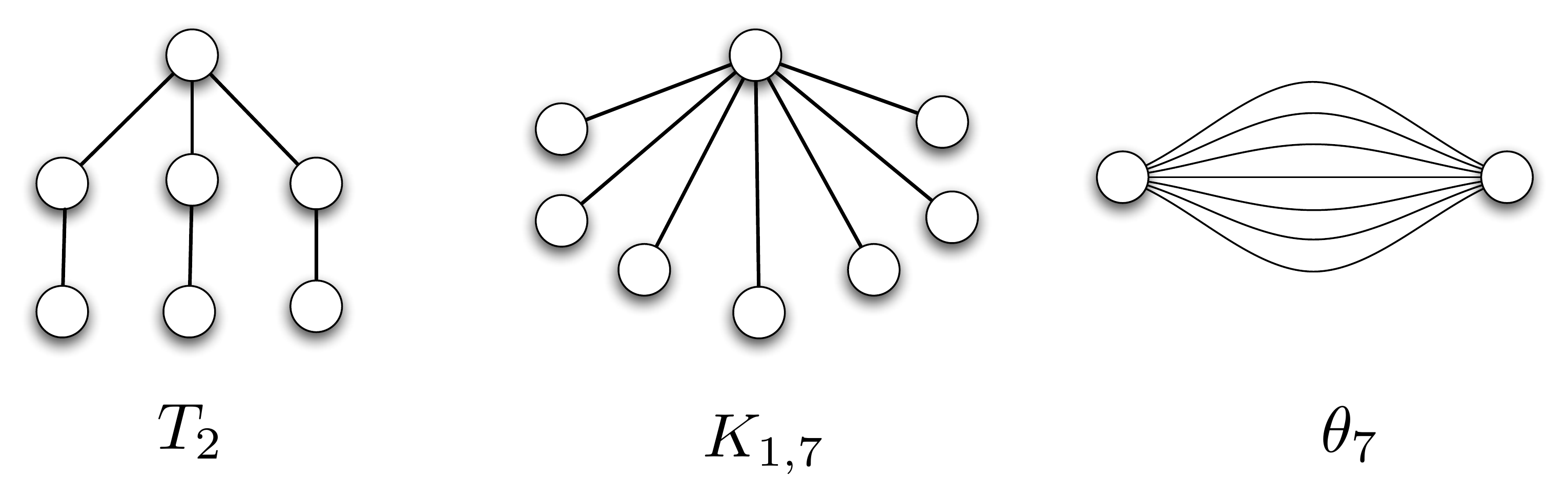}
\caption{\label{fig:theta_c} Graphs $T_2$,  $t$-claw $K_{1,t}$ with $t=7$, and  \jc{} with $c = 7$}
 
\end{center}
 
\end{figure}

{Our interest in the \fd{} problem is motivated  by   its generality and the recent development in  kernelization or polynomial time preprocessing.  The parameterized complexity of this general problem is well understood. 
By a celebrated result of Robertson and Seymour, every \fd{} problem is  fixed-parameter tractable (FPT). That is, 
there is an algorithm solving the problem  in time  $O(f(k) \cdot n^3)$ \cite{RobertsonS13}. In this paper we study this 
problem from the view point of polynomial time preprocessing and approximation, when the obstruction set $\cal F$ satisfies certain 
properties. }

Preprocessing as a strategy for coping with hard problems is universally applied 
in practice and the notion of {\em kernelization} provides a mathematical framework for analyzing the quality of preprocessing strategies. We consider parameterized problems, where every instance $I$ comes with a {\em parameter} $k$. Such a problem is said to admit a {\em polynomial kernel} if every instance $(I,k)$ can be reduced in polynomial time to an equivalent instance with both size and parameter value bounded by a polynomial in $k$.
The study of kernelization is a major research frontier of 
Parameterized Complexity and many important recent advances in the area
are on kernelization. These include 
general results
showing that  certain classes of parameterized problems have polynomial kernels~\cite{Alon:2010vp,H.Bodlaender:2009ng,F.V.Fomin:2010oq,kratsch-snp}. 
The recent development of a framework for ruling out polynomial kernels under
certain complexity-theoretic
assumptions~\cite{BDFH08,Dell:2010sh,FS08}  
 has added a new dimension to
the field and strengthened its connections to classical complexity.  For overviews of the kernelization we  refer to surveys~\cite{Bodlaender09,GN07SIGACT}  and to the corresponding chapters in books on Parameterized Complexity 
\cite{FlumGroheBook,Niedermeierbook06}.

While the initial interest in kernelization was driven mainly by practical applications, the notion of kernelization   appeared to be very important in theory as well. It is well known, see e.g. \cite{DowneyF98}, that a parameterized problem is fixed parameter tractable, or belongs to the class FPT,  if and only if it has (perhaps exponential) kernel. Kernelization is a way to  classify the problems belonging to FPT, the most important class in  Parameterized Complexity, according to the sizes of their kernels.   So far, most of the work done in the field of kernelization  is still specific to particular problems and  powerful unified techniques to identify classes of problems with polynomial kernels are still in nascent stage. 
One of the fundamental challenges in the area is the possibility to characterise     general classes of parameterized problems possessing kernels of polynomial sizes.  From this perspective,  the class of  the \fd{} problems is  very interesting
because it contains as special cases  
{\sc $p$-Vertex Cover}  and   {\sc $p$-Feedback Vertex Set} problems which  are the most  intensively studied 
problems  from the kernelization perspective.

\paragraph{Our contribution and key ideas.} One of the main conceptual contributions of this work is the extension of protrusion techniques, initially  developed in  \cite{H.Bodlaender:2009ng,F.V.Fomin:2010oq} for obtaining meta-kernelization theorems for problems on sparse  graphs like planar and $H$-minor-free graphs, to   general graphs.
We demonstrate   this  by obtaining  a number of   kernelization results  on the \fd{}~problem, when   $\mathcal{F}$ contains a  planar graph. 
Our first result is the following theorem for graphs containing no  star with $t$ leaves $K_{1,t}$, see  Figure~\ref{fig:theta_c}, as an induced subgraph. 

\begin{theorem}
\label{thm:k_1_t}
Let $\cal F$ be  an  obstruction set containing  a planar graph.
   Then  \fd{}   admits  a linear vertex kernel on  graphs excluding $K_{1,t}$ as an induced subgraph, where  $t$ is a fixed integer.
 \end{theorem}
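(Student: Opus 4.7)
The plan is to combine three standard ingredients of protrusion-based kernelization with the structural power that $K_{1,t}$-freeness provides in place of sparsity. Since $\mathcal{F}$ contains a planar graph, a theorem of Robertson and Seymour guarantees a constant $\eta=\eta(\mathcal{F})$ such that every $\mathcal{F}$-minor-free graph has treewidth at most $\eta$. In particular, for any $\mathcal{F}$-hitting set $S^{\star}$ of $G$ one has $\tw(G\setminus S^{\star})\le \eta$, so the task reduces to compressing $G$ outside a small ``modulator'' onto which the treewidth bound applies.

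First, I would compute in polynomial time a modulator $S$ (that is, an $\mathcal{F}$-hitting set) of size $O(k)$. A constant-factor approximation suffices; one can, for example, run an LP/greedy rounding scheme driven by the treewidth bound above together with the $K_{1,t}$-free assumption, which limits the size of any independent set in the neighbourhood of a single vertex to at most $t-1$. If no hitting set of size $\le c k$ can be produced we safely output a trivial \no-instance; otherwise $|S|=O(k)$ and $\tw(G\setminus S)\le \eta$.

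Second, I would use $S$ to build a \emph{linear protrusion decomposition} of $G$: a partition $V(G)=Y\cup X_{1}\cup\cdots\cup X_{\ell}$ with $S\subseteq Y$, $|Y|=O(k)$, $\ell=O(k)$, and each $X_{i}$ an $r$-protrusion (a subgraph of treewidth $\le\eta$ bordered by at most $r$ vertices) for a constant $r=r(t,\mathcal{F})$. The key idea is that, in a fixed near-optimal tree decomposition of $G\setminus S$, each $v\in S$ can ``see'' only $O_{t}(1)$ essentially different bag-neighbourhoods, because a large antichain of pairwise-separated neighbours of $v$ would contain an independent set of $t$ vertices and hence an induced $K_{1,t}$. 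Absorbing these neighbourhoods into $Y$ breaks $G\setminus Y$ into components of bounded boundary while keeping $|Y|$ linear in $k$.

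Finally, since \fd{} is expressible in counting monadic second-order logic, the protrusion-replacement machinery of Bodlaender et al.\ replaces, in polynomial time, each $X_{i}$ of size larger than some constant $\beta=\beta(\mathcal{F},t)$ by an equivalent gadget on at most $\beta$ vertices, without changing the answer (modulo a bookkeeping adjustment of $k$). The resulting equivalent instance has at most $|Y|+\beta\cdot\ell=O(k)$ vertices, the required linear vertex kernel. The main obstacle is the middle step: producing a linear-sized protrusion decomposition in a graph that can be arbitrarily dense. In the meta-kernelization results for planar or $H$-minor-free graphs this relies on Euler-type counting of components and boundary vertices, and those tools are unavailable here; their replacement is the combination of the $K_{1,t}$-free assumption with the treewidth bound on $G\setminus S$, and arranging this combination to yield simultaneously $\ell=O(k)$ components and constant-sized boundaries, while remaining compatible with the CMSO-based replacement in the final step, is the technical heart of the proof.
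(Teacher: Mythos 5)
Your overall architecture (modulator plus protrusion replacement, with $K_{1,t}$-freeness substituting for sparsity in bounding neighbourhoods of the modulator) is the paper's architecture, but there are two genuine gaps, one of which is fatal as written. The fatal one is your first step: you assert that a \emph{constant-factor} approximation for the $\mathcal{F}$-hitting set can be computed on $K_{1,t}$-free graphs via ``an LP/greedy rounding scheme.'' No such algorithm is known for general \fd{} with a planar obstruction, and the paper itself only establishes an $O(\log^{3/2}OPT)$-approximation (Theorem~\ref{thm:fapprox}); constant factors are known only for special cases such as \vclong{}, \textsc{Feedback Vertex Set} and \dhs{}. With only the $O(\log^{3/2}OPT)$-approximate modulator, your $|Y|$ and hence your final kernel is $O(k\log^{3/2}k)$, not linear. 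The paper escapes this with a two-phase trick you are missing: it first uses the \emph{approximate} hitting set $X$ (of size $O(k\log^{3/2}k)$) to locate and replace large protrusions efficiently, shrinking the instance to $O(k\log^{3/2}k)$ vertices; it then applies the protrusion rule exhaustively by brute-force guessing boundaries (affordable in $k^{O(d)}$ time on the shrunken instance), and performs the final size analysis against the \emph{hypothetical optimal} hitting set of size at most $k$, which exists exactly when the instance is a YES-instance. If the fully reduced instance still has more than $O(k)$ vertices, the algorithm answers NO. That is, the linear bound needs only the existence, not the computation, of a size-$k$ modulator.

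The second gap is your middle step, which you yourself flag as ``the technical heart'': you need a linear protrusion decomposition with $\ell=O(k)$ pieces and constant boundaries, and your justification via ``essentially different bag-neighbourhoods'' is not an argument. The paper does not construct such a decomposition at all. Instead, Lemma~\ref{lem:prottfd} extracts a \emph{single} $2(d+1)$-protrusion of size at least $(n-|X|)/(4|N(X)|+1)$ from a nice tree decomposition of $G\setminus X$ (mark the topmost bag of each vertex of $N(X)$ and the closure under least common ancestors, then take the largest remaining component), and iterates. The $K_{1,t}$-freeness enters only through Lemma~\ref{lem:k1tdegreebound}: since $\tw(G\setminus X)\le d$, the neighbourhood of any $v\in X$ in $G\setminus X$ is $(d+1)$-colourable, each colour class is independent and therefore has fewer than $t$ vertices, so $|N(X)|\le (t-1)(d+1)|X|$. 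Once no protrusion of size $\gamma(2d+2)$ remains, $|V(G^*)|\le \gamma(2d+2)(4|N(X)|+1)+|X|=O(|X|)$, which is $O(k)$ when $|X|\le k$. If you adopt the paper's single-protrusion extraction and the compute-with-approximation/analyse-with-optimum device, your outline becomes a proof; as it stands, both the first and second steps are unsubstantiated.
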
  
 Several well studied graph classes do not contain graphs with induced $K_{1,t}$. Of course,  every graph with maximum vertex degree at most $t-1$ is  $K_{1,t}$-free.
  The class of $K_{1,3}$-free graphs, also known as claw-free graphs, contains   line graphs and de Bruijn graphs.  Unit disc graphs are known to be $K_{1,7}$-free \cite{Clark:1990eu}.  We remark that  the number of vertices $O(k)$ in kernels of  
  Theorem~\ref{thm:k_1_t} is (up to a multiplicative constant) optimal, unless P$=$NP. 
  
   Our kernelization is   a divide and conquer algorithm which finds and replaces large protrusions, that is, subgraphs of constant treewidth   separated from the remaining part of the graph by a constant number of vertices,  by smaller, ``equivalent'' protrusions. Here we use the results
from the   work by Bodlaender et al.~\cite{H.Bodlaender:2009ng} that enable this step whenever the parameterized problem in question ``behaves like a regular language''. To prove that \fd{} has the desired properties for this step, we formulate the problem in monadic second order logic and show that it exhibits
certain monotonicity properties.  As a corollary we obtain that {\sc $p$-Feedback Vertex Set}, 
{\sc $p$-Diamond Hitting Set}, {\sc $p$-Pathwidth One Deletion Set}, {\sc $p$-Outerplanar Deletion Set} admit linear 
vertex kernel on graphs excluding $K_{1,t}$ as an induced subgraph. With the same methodology we also obtain 
$O(k \log k)$ vertex kernel for {\sc $p$-Disjoint Cycle Packing} on graphs excluding $K_{1,t}$ as an induced subgraph. 
It is worthwhile to mention that {\sc $p$-Disjoint Cycle Packing} does not admit polynomial kernel on general 
graphs~\cite{BodlaenderThomasseYeo2009}.

 \medskip 
  Let  \jc{} be  
 a graph with two vertices and $c\geq 1$ parallel edges,  see Figure~\ref{fig:theta_c}.
Our second   result is the following theorem on general graphs. 
\begin{theorem}
\label{thm:thetackernel}
Let $\cal F$ be  an  obstruction set containing  \jc{}. Then \fd{} admits a kernel of size $O(k^2 \log^{3/2}k)$. 
 \end{theorem}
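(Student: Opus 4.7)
The plan is to extend the protrusion-replacement paradigm of Bodlaender et al.\ from topologically restricted graph classes to general graphs, using the approximation algorithm from the previous bullet as a structural oracle. The outline is: (i) use the $O(\log^{3/2}\mathrm{OPT})$-approximation to find an $\mathcal{F}$-hitting set $X$ of size $O(k \log^{3/2} k)$, or safely reject if no such set exists; (ii) observe that $G - X$ is $\mathcal{F}$-minor-free, hence \jc{}-minor-free, and therefore has treewidth at most some constant $t = t(c)$ by the excluded-planar-minor theorem of Robertson and Seymour, since \jc{} is planar; (iii) build a protrusion decomposition of $G$ relative to $X$; and (iv) replace each large protrusion by a bounded-size equivalent.

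For step (iii), I would take a tree decomposition of $G - X$ of width at most $t$ and apply a standard separator argument to partition $V(G) \setminus X$ into $m = O(|X|)$ sets $Y_1,\dots,Y_m$, each of treewidth at most some constant $r = r(t)$ and with boundary $N(Y_i) \cap X$ of constant size. For step (iv), the core technical work is to show that \fd{} has \emph{finite integer index}, so that a protrusion replacement lemma applies: two protrusions with the same boundary are equivalent whenever they behave identically against every partial solution, and the set of equivalence classes is finite (parameterised by $r$). This will require an MSO-style formulation of $\mathcal{F}$-minor-freeness together with a monotonicity property, analogous to the argument I would use for Theorem~\ref{thm:k_1_t}, but now applied to protrusions separated from the rest of the graph only by vertices of the approximate solution $X$ rather than by a globally sparse periphery.

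Putting the pieces together, each $Y_i$ is reduced to a representative whose size is bounded by a function of $r$ and $k$, so the kernel has size $|X| + m \cdot O(k) = O(k^2 \log^{3/2} k)$. The main obstacle is the extension to general graphs in step (iv): all earlier applications of protrusion replacement relied on global sparsity of the host graph to ensure both the existence of the decomposition and the finite-integer-index property; here we must verify that the equivalence survives when the protrusion boundaries live in an arbitrary ambient subgraph $G[X]$, and that the number of equivalence classes still depends only on $r$, not on $G[X]$ or on $k$. This is where the \emph{non-trivial application of protrusion techniques} mentioned in the introduction enters, and it is the step I expect to consume most of the technical effort.
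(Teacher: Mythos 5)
Your steps (i), (ii) and (iv) match the paper, but there is a genuine gap at step (iii), and you have misplaced where the real difficulty lies. On a general graph you cannot ``apply a standard separator argument'' to partition $V(G)\setminus X$ into $O(|X|)$ pieces each forming an $r$-protrusion: the boundary $\partial(Y_i)$ of a piece $Y_i\subseteq V(G)\setminus X$ contains \emph{every} vertex of $Y_i$ with a neighbour in $X$, and a single vertex of $X$ may be adjacent to almost all of $V(G)\setminus X$. The separator argument on the tree decomposition of $G-X$ (Lemma~\ref{lem:prottfd}) only yields a large protrusion when $|N(X)|$ is small --- the guaranteed protrusion has size $\frac{n-|X|}{4|N(X)|+1}$. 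In the $K_{1,t}$-free setting of Theorem~\ref{thm:k_1_t} the bound $|N(X)|=O(|X|)$ comes for free (Lemma~\ref{lem:k1tdegreebound}); on general graphs it does not, and this --- not finite integer index --- is the obstacle. Finite integer index for \fd{} is proved once via strong monotonicity (Lemma~\ref{lem:jcisfii}) and is a statement about $t$-boundaried graphs glued to arbitrary $t$-boundaried graphs; it is entirely insensitive to the ambient graph $G[X]$, so the concern you flag in your last paragraph is not where the effort goes.

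The missing ingredient is a degree-reduction phase, which is the bulk of the paper's proof (Lemma~\ref{lem:maxdegreebound}) and is the reason the theorem is stated only for $\mathcal{F}\ni\theta_c$ rather than for general planar obstructions. Concretely: for each $v$ in the approximate hitting set $\mathcal{S}$ one works in $G_v=G\setminus(\mathcal{S}\setminus\{v\})$, which has constant treewidth, computes a maximum \jc{}-flower through $v$ in linear time via MSO, applies the Flower Rule when the flower exceeds $k$, and otherwise extracts a hitting set $T_v$ of size $O(k)$ that \emph{avoids} $v$ (Lemma~\ref{lem:specializedhitset}). Then the \qel{} with $q=c$, applied to the bipartite graph between $H_v=\mathcal{S}_v\cup T_v$ and the components of $G\setminus H_v$ adjacent to $v$, reroutes edges so that $d(v)\le c\,h_v+c(c-1)h_v=O(k\log^{3/2}k)$; correctness of this rewiring needs a separate argument (Lemma~\ref{lem:q-expansioncorrectness}) exploiting that minimal \jc{} minor-models have no cut vertices. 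Only after $\Delta(G')=O(k\log^{3/2}k)$ is established does $|N(X)|\le|X|\cdot\Delta(G')=O(k^2\log^3 k)$ make the protrusion machinery applicable, and the final $O(k^2\log^{3/2}k)$ bound then follows from re-running the size analysis with an exact (size-$k$) hitting set on a yes-instance. Without this phase your construction stalls at step (iii), and your closing arithmetic ($m\cdot O(k)$ with constant-size representatives) does not in any case produce the claimed bound.
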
  
A number of well-studied NP-hard combinatorial problems are special cases of  \tfd{}. 
When $c=1$, this   is the classical 
{\sc Vertex Cover} problem~\cite{NemT74}. For $c=2$, this is another well studied problem, the  {\sc Feedback Vertex Set} problem~\cite{BafnaBF99,BarYGJ98,ChudakGW98,Karp72}. 
When $c=3$, this is the \dhs{} problem~\cite{Fiorini:2009ipco}. Let us note that the size of the best known kernel for $c=2$ is $O(k^2)$, which is very close to the size of the kernel in Theorem~\ref{thm:thetackernel}. Also 
Dell and van Melkebeek  proved  that  no NP-hard vertex deletion problem based on a graph property that is inherited by subgraphs can have kernels of size $O(k^{2-\varepsilon})$   unless $coNP \subseteq NP/poly$ \cite{Dell:2010sh} and thus the sizes of the kernels in Theorem~\ref{thm:thetackernel} are tight up to polylogarithmic factor.

 The proof of Theorem~\ref{thm:thetackernel}  is obtained in a series of non-trivial steps. 
 The very high level idea is to reduce the general case to problem on graphs of bounded degrees, which allows us to   use the protrusion techniques as in the proof of Theorem~\ref{thm:k_1_t}. However, vertex degree reduction is not straightforward and requires several new ideas.   
 One of the new tools is  a generic
    $O(\log^{3/2} OPT)$-approximation algorithm  for the \fd{}~problem when the class of excluded minors for $\mathcal{F}$ contains at least 
one planar graph. More precisely, we  obtain the following result,  which is  interesting in its own.   
\begin{theorem}
\label{thm:fapprox}
Let $\cal F$ be  an  obstruction set containing  a planar graph. Given a graph $G$, in polynomial time 
we can find a subset $S\subseteq V(G)$ such that $G[V\setminus S]$   contains  no   element of $\cal F$ as a minor and $|S|= O(OPT\cdot \log^{3/2} OPT)$. Here $OPT$ is the minimum size of such a set $S$. 
\end{theorem}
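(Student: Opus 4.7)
The plan is to reduce approximating \fd{} to approximating a bounded-treewidth modulator problem, and then to solve the residual instance exactly. Since $\cal F$ contains a planar graph $H$, the Robertson--Seymour excluded grid theorem supplies a constant $w = w(\cal F)$ such that every graph of treewidth larger than $w$ contains $H$, and hence an element of $\cal F$, as a minor. Consequently every $\cal F$-minor-free graph has treewidth at most $w$, so for any optimal $\cal F$-hitting set $S^*$ the graph $G - S^*$ witnesses that the minimum vertex set whose removal drops the treewidth of $G$ to at most $w$ has size at most $OPT$.

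The heart of the argument is a polynomial-time algorithm producing $X \subseteq V(G)$ of size $O(OPT \log^{3/2} OPT)$ such that $G - X$ has treewidth at most $w$. I would build $X$ recursively: while the current subgraph $G'$ has treewidth exceeding $w$, use the Feige--Hajiaghayi--Lee $O(\sqrt{\log n})$-approximation to compute a near-optimal balanced vertex separator of $G'$, add it to $X$, and recurse on the resulting pieces. The balance property limits the recursion to depth $O(\log OPT)$, and at each level the total separator weight can be charged against the intersection of $S^*$ with the current pieces, contributing $O(\sqrt{\log OPT}) \cdot OPT$ vertices per level and hence $|X| = O(OPT \log^{3/2} OPT)$ in total. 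The main obstacle is precisely this charging: one must show that the separator approximation ratio can be stated in terms of $OPT$ rather than $n$, and that the per-level contributions summed across all levels do not accumulate additional logarithmic factors. This is handled by terminating the recursion as soon as a piece has treewidth at most $w$, combined with the key structural fact that any balanced vertex separator of a subgraph of $G$ has size at most a constant multiple of the local optimum modulator, which is at most $OPT$.

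With $X$ in hand, $G - X$ has treewidth at most $w$, and the property that a graph contains no member of $\cal F$ as a minor is expressible in monadic second-order logic on graphs of bounded treewidth, since each forbidden minor is a fixed finite graph. By Courcelle's theorem we compute in linear time an optimal $\cal F$-hitting set $Y$ of $G - X$. As the restriction of any feasible $\cal F$-hitting set for $G$ to $V(G) \setminus X$ is a feasible hitting set for $G - X$, we have $|Y| \le OPT$. Returning $S = X \cup Y$ therefore yields an $\cal F$-hitting set of size $O(OPT \log^{3/2} OPT)$, as required.
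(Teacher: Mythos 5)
Your overall architecture matches the paper's: a recursive decomposition driven by the Feige--Hajiaghayi--Lee approximation, with separators/bags added to the solution, termination once the treewidth drops below the constant $d$ forced by the excluded grid theorem, and an exact MSO-based solve on the bounded-treewidth residue. The per-level accounting is also essentially the paper's: pieces at one level are vertex-disjoint, each active piece contains an $H$ minor-model, so the separators added at one level total $O(OPT\sqrt{\log OPT})$.

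The genuine gap is the recursion depth. A balanced vertex separator is balanced with respect to the \emph{number of vertices} of the current piece, so your recursion has depth $\Theta(\log n)$, not $O(\log OPT)$, and the resulting bound is $O(OPT\sqrt{\log OPT}\cdot\log n)$ --- a weaker guarantee that the paper explicitly says is useless for its later kernelization ("an approximation algorithm with factor $O(\log n)$ is no good"). Your stated fix --- terminating early and bounding each separator by the local modulator --- controls the per-level \emph{size} but does nothing to shorten the recursion: one can have a graph where $OPT$ is tiny yet the vertex-balanced splitting takes $\Omega(\log n)$ rounds before the piece carrying the minor-models is resolved. The paper's missing ingredient is a two-phase bootstrap. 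Phase one runs a cruder recursion that splits off, one at a time, a subtree of the tree decomposition inducing treewidth exactly $d$; since each split-off piece contains a vertex-disjoint $H$ minor-model, the depth is at most $k$ and the output is a hitting set $Z$ of size $O(k^2\sqrt{\log k})$. Phase two reruns the recursion but chooses the splitting bag so that the quantity $|V(H_t)\cap Z|$ (a "good labeling function" on the nice tree decomposition) drops by a constant factor on each side; balancing with respect to $Z$ rather than $V(G)$ makes the depth $O(\log|Z|)=O(\log k)$ and yields the claimed $O(OPT\log^{3/2}OPT)$. Without some such proxy for $OPT$ to balance against, your charging argument cannot close.
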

While several generic approximation  algorithms were known for  problems of 
minimum vertex deletion to obtain subgraph with property $P$, like when $P$ is a hereditary property   with a finite number of minimal forbidden subgraphs \cite{Lund:1993cl}, or 
 can be expressed as a universal first order sentence over subsets of edges of the graph 
\cite{Kolaitis:1995vf}, we are not aware of any generic approximation algorithm for the case when a property $P$ is characterized by a finite set of forbidden minors.

We then use the approximation algorithm as a subroutine  in a polynomial time algorithm that transforms the input instance $(G,k)$ 
into an equivalent instance $(G',k')$ such that $k'\leq k$ and the maximum degree of $G'$ is bounded by $O(k \log^{3/2} k)$. 
An important combinatorial tool used in designing this algorithm is the~\qel{}. For $q=1$ this lemma is Hall's theorem and its usage can be seen as applying Crown Decomposition technique~\cite{AFLS07,CFJ04}. 
After we manage to reduce the maximum degree of a graph, we  apply protrusion techniques and 
prove  Theorem~\ref{thm:thetackernel}.

\paragraph{Related work.}  
All non-trivial \fd{} problems are
NP-hard~\cite{LewisYannakakis1980}. 
By one of the most well-known
consequences of the celebrated Graph Minor theory of Robertson and
Seymour,  the \fd{}~problem is fixed parameter tractable for every finite set of forbidden minors. A special case of that problem, when the set  $\cal F$ contains  \jc{} 
  was studied from approximation and parameterized perspectives. 
In particular, the case of $p$-$\theta_1$-{\sc  Deletion}
or,  equivalently,   $p$-{\sc Vertex Cover},
 is   the most well-studied problem in  Parameterized Complexity.  
 Different kernelization techniques were tried for it, resulting in a
 $2k$-sized vertex kernel~\cite{AFLS07,ChenKJ01,DFRS04,Hochbaum:1994kl}.
 For the kernelization of {\sc $p$-Feedback Vertex Set}, or
 $p$-$\theta_2$-{\sc   Deletion}, there has been a sequence of
 dramatic improvements starting from an $O(k^{11})$ vertex kernel by
 Buragge et al.~\cite{BEFLMR2006}, improved to  $O(k^3)$ by
 Bodlaender~\cite{Bod07}, and then finally to $O(k^2)$ by
 Thomass\'e~\cite{T09}. Recently Philip et al.~\cite{PhilipRS09} and Cygan et al.~\cite{CyganPPW10} obtained polynomial kernels for 
{\sc $p$-Pathwidth One Deletion Set}.  Constant factor approximation algorithm are known for  {\sc Vertex Cover} and  {\sc  Feedback Vertex Set}~\cite{BafnaBF99,Bar-YehudaE81}. 
Very recently, a constant factor approximation algorithm for the \dhs{} problem, or  $p$-$\theta_3$-{\sc  Deletion}, was obtained in \cite{Fiorini:2009ipco}. 
 Prior to our work, no polynomial kernels were  known for  $p$-\dhs{} or more general families of  \fd{} problems.

\medskip

The remaining part of the  paper is organised as follows. In Section~\ref{sec:preliminaries} we provide preliminaries on  basic notions from Graph Theory and  Logic used in the paper. Section~\ref{sec:kernelizationk1t} is devoted to the proof of  Theorem~\ref{thm:k_1_t}. In Section~\ref{sec:approx} we give an approximation algorithms proving Theorem~\ref{thm:fapprox}. The proof of Theorem~\ref{thm:thetackernel} is given in Section~\ref{sec:kernelization}.  We conclude with open questions in Section~\ref{section:conclision}.

\section{Preliminaries}\label{sec:preliminaries}
In this section we give various definitions which we use in the paper.
For $n\in\mathbb{N}$, we use $[n]$ to denote the set $\{1,\ldots,n\}$. 
We use $V(G)$ to denote the vertex set of a graph $G$, and $E(G)$ to denote 
the edge set. The degree of a vertex $v$ in $G$ is the number of edges
incident on $v$, and is denoted by $d(v)$. We use $\Delta(G)$ to denote the maximum degree of
  $G$.
A graph~$G'$ is a \emph{subgraph} of~$G$ if~$V(G') \subseteq V(G)$
and~$E(G') \subseteq E(G)$.  The subgraph~$G'$ is called an
\emph{induced subgraph} of $G$ if $E(G') = \{\{u,v\} \in E(G) \mid
u,v \in V(G')\}$. Given a subset $S\subseteq V(G)$ the subgraph
induced by $S$ is denoted by~$G[S]$.  The subgraph induced by $
V(G)\setminus S$ is denoted by $G\setminus S$.  We denote by
$N(S)$ the open neighborhood of $S$, i.e. the set of vertices in
$V(G)\setminus S$ adjacent to $S$. Let  $\cal F$ be a finite set of graphs. 
A vertex subset $S\subseteq V(G)$ of a graph $G$ is said to be a 
$\cal{F}$-{\em hitting set} if $G\setminus S$ does not contain any graphs in the family 
$\cal F$ as a minor. 
  

By {\em contracting} an edge $(u,v)$ of a graph $G$, we mean 
identifying the vertices $u$ and $v$, keeping all the parallel edges 
and removing all the loops. A {\em minor} of a graph $G$ is a graph $H$ that can be obtained from 
a subgraph of $G$ by contracting edges. We keep parallel edges after 
contraction since the graph \jc{} which we want to exclude as a minor itself 
contains parallel edges. 

  Let $G,H$ be two graphs. A subgraph $G'$ of $G$ 
is said to be a \emph{minor-model} of $H$ in $G$ if $G'$ contains $H$
as a minor. The subgraph $G'$ is a \emph{minimal minor-model} of $H$ in 
$G$ if no proper subgraph of $G'$ is a minor-model of $H$ in $G$. 

A graph class $\mathcal C$ is {\em  minor closed} 
if any minor of any graph in $\mathcal C$ is also an element of $\mathcal C$. 
A minor closed graph class $\mathcal C$ is $H${\em -minor-free}  or simply $H${\em -free} if $H \notin \mathcal C$.

\subsection{Monadic Second Order Logic (MSO)}
\label{countmsop}
The syntax of MSO on graphs includes the logical connectives $\vee$, $\land$, $\neg$, 
$\Leftrightarrow $,  $\Rightarrow$, variables for 
vertices, edges, sets of vertices and sets of edges, the quantifiers $\forall$, $\exists$ that can be applied 
to these variables, and the following five binary relations: 
\begin{enumerate}

\item $u\in U$ where $u$ is a vertex variable 
and $U$ is a vertex set variable;\item  $d \in D$ where $d$ is an edge variable and $D$ is an edge 
set variable; \item  $\mathbf{inc}(d,u)$, where $d$ is an edge variable,  $u$ is a vertex variable, and the interpretation 
is that the edge $d$ is incident on the vertex $u$; \item $\mathbf{adj}(u,v)$, where  $u$ and $v$ are 
vertex variables $u$, and the interpretation is that $u$ and $v$ are adjacent; \item  equality of variables representing vertices, edges, set of vertices and set of edges. 
\end{enumerate}

Many common graph-theoretic notions such as vertex degree, connectivity,
planarity, being acyclic, and so on, can be expressed in MSO, as can be
seen from introductory
expositions~\cite{BorieParkerTovey1992,HandbookGraphGrammars1997Ch5}. Of
particular interest to us are \pmin{} problems. In a \pmin{} graph
problem $\Pi$, we are given a
graph $G$ and an integer $k$ as input. The objective is to decide
whether there is a vertex/edge set $S$ of size at most $k$ such that the
MSO-expressible predicate $P_\Pi(G,S)$ is satisfied.

\subsection{Parameterized algorithms and Kernels}
\label{paraak}
A parameterized problem $\Pi$ is a subset of $\Gamma^{*}\times \mathbb{N}$ for some finite alphabet $\Gamma$. An instance of a parameterized problem consists of $(x,k)$, where $k$ is called the parameter. A central notion in 
parameterized complexity is {\em fixed parameter tractability (FPT)} which means, for a given instance $(x,k)$, 
solvability in time $f(k)\cdot p(|x|)$, where $f$ is an arbitrary function of $k$ and $p$ is a polynomial in the input size. The notion of {\em kernelization} is formally defined as follows. 

\begin{define} {\rm{[}\textbf{Kernelization, Kernel\rm{]}}~\cite{FlumGroheBook}}
A kernelization algorithm for a parameterized problem $\Pi\subseteq\Sigma^{*}\times\mathbb{N}$
is an algorithm that, given $(x,k)\in\Sigma^{*}\times\mathbb{N}$,
outputs, in time polynomial in $|x|+k$, a pair $(x',k')\in\Sigma^{*}\times\mathbb{N}$
such that (a) $(x,k)\in\Pi$ if and only if $(x',k')\in\Pi$ and (b)
$|x'|,k'\leq g(k)$, where $g$ is some computable function. The output
instance $x'$ is called the kernel, and the function $g$ is referred
to as the size of the kernel. If $g(k)=k^{O(1)}$ then we say that
$\Pi$ admits a polynomial kernel.
\end{define}

\subsection{Tree-width and protrusions}
\label{trewap}

Let $G$ be a graph.  A {\em tree decomposition} of a graph $G$ is a pair $(T,\mathcal{ X}=\{X_{t}\}_{t\in V(T)})$
such that
\begin{itemize}
\setlength\itemsep{-1.2mm}
\item $\cup_{t\in V(T)}{X_t}=V(G)$,
\item for every edge $\{x,y\}\in E(G)$ there is a $t\in V(T)$ such that  $\{x,y\}\subseteq X_{t}$, and 
\item for every  vertex $v\in V(G)$ the subgraph of $T$ induced by the set  $\{t\mid v\in X_{t}\}$ is connected.
\end{itemize}

The {\em width} of a tree decomposition is $\max_{t\in V(T)} |X_t|-1$ and the {\em treewidth} of $G$ 
is the  minimum width over all tree decompositions of $G$. A tree
decomposition  $(T,\mathcal{ X})$ is called a {\em nice tree
decomposition} if $T$ is a tree rooted at some node $r$ where
$X_{r}=\emptyset$, each node of $T$ has at most two children, and each
node is of one of the following kinds:
\begin{enumerate}
\setlength\itemsep{-1.2mm}
\item {\em Introduce node}: a node $t$ that has only one child $t'$ where $X_{t}\supset X_{t'}$ and  $|X_{t}|=|X_{t'}|+1$.
\item {\em Forget node}: a node $t$ that has only one child $t'$  where $X_{t}\subset X_{t'}$ and  $|X_{t}|=|X_{t'}|-1$.
\item {\em Join node}:  a node  $t$ with two children $t_{1}$ and $t_{2}$ such that $X_{t}=X_{t_{1}}=X_{t_{2}}$.
\item {\em Base node}: a node $t$ that is a leaf of $t$, is different than the root, and $X_{t}=\emptyset$. 
\end{enumerate}
Notice that, according to the above definitions, the root $r$ of $T$ is
either a forget node or a join node. It is well known that any tree
decomposition of $G$ can be transformed into a nice tree decomposition
in time $O(|V(G)|+|E(G)|)$ maintaining the same
width~\cite{Kloks1994}. We use $G_t$ to denote the graph induced on the
vertices $\cup_{t'}X_t'$, where $t'$ ranges over all descendants of $t$,
including $t$. We use $H_t$ to denote $G_t[V(G_t)\setminus X_t]$.
%

Given a graph $G$ and $S\subseteq V(G)$, we define $\partial_G(S)$ as
the set of vertices in $S$ that have a neighbor in $V(G)\setminus S$. For a
set $S \subseteq V(G)$ the neighborhood of $S$ is $N_G(S) =
\partial_G(V(G) \setminus S)$. When it is clear from the context, we omit the subscripts. We now define the notion of a \emph{protrusion}.

\begin{define}{\rm [\bf $r$-protrusion\rm{]}} 
Given a graph $G$, we say that a set $X \subseteq V(G)$ is an {\em $r$-protrusion} of $G$ if $\tw(G[X])\leq r$ and $|\partial(X)| \leq r$. 
\end{define}


\subsection{$t$-Boundaried Graphs} 
In this section we define {\em $t$-boundaried graphs} and various
operations on them. Throughout this section, $t$ is an arbitrary
positive integer.
\begin{define}{\rm [\bf $t$-Boundaried Graphs\rm ]}
A $t$-boundaried graph is a graph $G$ with $t$ distinguished vertices, uniquely labeled from $1$ to $t$. The set $\partial(G)$ of labeled vertices is called the boundary of $G$. The vertices in $\partial(G)$ are referred to as boundary vertices or terminals.
\end{define}

For a graph $G$ and a vertex set $S \subseteq V(G)$, we will sometimes consider the graph $G[S]$ as the $|\partial(S)|$-boundaried graph with $\partial(S)$ being the boundary.

\begin{define}{\rm [\bf Gluing by $\oplus$\rm]} Let $G_1$ and $G_2$ be two $t$-boundaried graphs. We denote by $G_1 \oplus G_2$ the $t$-boundaried graph obtained by taking the disjoint union of $G_1$ and $G_2$ and identifying each vertex of $\partial(G_1)$ with the vertex of $\partial(G_2)$ with the same label; that is, we glue them together on the boundaries. In $G_1 \oplus G_2$ there is an edge between two labeled vertices if there is an edge between them in $G_1$ or in $G_2$.
\end{define}

In this paper, $t$-boundaried graphs often come coupled with a vertex
set which represents a partial solution to some optimization problem.
For ease of notation we define ${\cal H}_t$ be to be the set of pairs
$(G,S)$, where $G$ is a $t$-boundaried graph and $S \subseteq V(G)$.


\begin{define}{\rm [\bf Replacement\rm]}
Let $G$ be a graph containing a $r$-protrusion $X$. Let $G_1$ be an $r$-boundaried
graph. The act of replacing $G[X]$ with $G_1$ corresponds to changing $G$
into $G[(V(G) \setminus X) \cup \partial(X)] \oplus G_1$. 
\end{define}

\subsection{Finite Integer Index}
\label{fiind}
\begin{define}{\rm [\bf Canonical Equivalence\rm]}
For a parameterized problem $\Pi$ and two $t$-boundaried graphs $G_1$ and $G_2$, we say that $G_1 \equiv _{\Pi} G_2$ if there exists a constant $c$ such that for all $t$-boundaried graphs $G_3$ and for all $k$,
$$(G_1 \oplus G_3, k) \in \Pi \mbox{ if and only if }  (G_2 \oplus G_3, k+c) \in \Pi.$$
\end{define}
\begin{define}{\rm [\bf Finite Integer Index\rm]}
We say that a parameterized problem $\Pi$ has {\em finite integer index}
if for every $t$ there exists a finite set ${\cal S}$ of $t$-boundaried
graphs such that for any $t$-boundaried graph $G_1$ there exists   $G_2 \in \cal{S}$ such that $G_2 \equiv _{\Pi} G_1$. Such a set ${\cal S}$ is called a set of representatives for $(\Pi, t)$. 
\end{define}
Note that for every $t$, the relation $\equiv_\Pi$ on $t$-boundaried graphs is an equivalence relation. A problem $\Pi$ is finite integer index if and only if for every $t$, $\equiv_\Pi$ is of finite index, that is, has a finite number of equivalence classes. The notion of {\em strong monotonicity} is an easy to check sufficient condition for a \pmin{} problem to have finite integer index.

\begin{define}{\rm [\bf Signatures\rm]}\label{def:signaturemin}
Let $\Pi$ be a \pmin{} problem. For a $t$-boundaried graph $G$ we define
the {\em signature function} $\zeta_G^\Pi : {\cal H}_t \rightarrow
\mathbb{N} \cup \{\infty\}$ as follows. For a pair $(G',S') \in {\cal
H}_t$, if there is no set $S \subseteq V(G)$ ($S\subseteq E(G)$) such
that $P_\Pi(G \oplus G', S \cup S')$ holds, then $\zeta_G^\Pi((G',S')) =
\infty$. Otherwise $\zeta_G^\Pi((G',S'))$ is the size of the smallest $S
\subseteq V(G)$ ($S \subseteq E(G)$) such that $P_\Pi(G \oplus G', S \cup S')$ holds. 
\end{define}

%
\begin{define}{\rm [\bf Strong Monotonicity\rm]}\label{def:minstrongmonmin}
A \pmin{} problem $\Pi$ is said to be \emph{strongly monotone} if there
exists a function $f : \mathbb{N} \rightarrow \mathbb{N}$ such that the
following condition is  satisfied. For every $t$-boundaried graph $G$,
there is a subset $S\subseteq V(G)$ such that for every $(G',S')\in {\cal H}_t$ such that $\zeta_G^\Pi((G',S'))$ is finite, $P_\Pi(G\oplus G',S\cup S')$ holds and $|S|\leq \zeta_G^\Pi((G',S'))+f(t)$. 
\end{define}



\subsection{MSO Formulations}\label{sec:mso-thetac}
We now give
MSO formulations for some properties involving $\cal F$ or \jc{} that we use
in our arguments. For a graph $G$ and a vertex set
$S\subseteq V(G)$,
let $Conn(G,S)$ denote the MSO formula which states that $G[S]$
is connected, and let $MaxConn(G,S)$ denote the MSO formula which
states that $G[S]$ is a maximal connected subgraph of $G$. 

%
%
%
%
%
%

\paragraph{$H$ minor-models. }
Let $\cal F$ be the finite forbidden set. 
For a graph $G$, we use $\phi_{H}(G)$ to denote an MSO formula which
states that $G$ contains $H$ as a minor --- equivalently, that
$G$ contains a minimal $H$ minor model. Let $V(H) = \{h_1, \ldots, h_c\}$. Then, $\phi_{H}(G)$ is given by:

\begin{align}
\label{cmso:jcminor}
\nonumber\phi_{H}(G)\equiv\exists X_1,\ldots,X_c\subseteq V(G)[\\
\nonumber & \bigwedge_{i \neq j} (X_i\cap X_j=\emptyset) \wedge \bigwedge_{1 \leq i \leq c}Conn(G,X_i)\wedge\\
 \nonumber & \bigwedge_{(h_i,h_j) \in E(H)}\exists x \in X_i \wedge y \in X_j [(x,y) \in E(G)]\\
]\end{align}

\paragraph{Minimum-size $\cal F$-hitting set.}

A minimum-size $\cal F$-hitting set of graph $G$ can be expressed as:

\begin{align}
\label{cmso:jchitset}
\mbox{Minimize }S\subseteq V(G)[\bigwedge_{H \in \cal F}\neg\phi_{H}(G \setminus S)]
\end{align}

\paragraph{Largest \jc{} ``flower''.}

Let $v$ be a vertex in a graph $G$. A maximum-size set $M$ of \jc{}
minor-models in $G$, all of which pass through $v$ and no two of
which share any vertex other than $v$, can be represented as:

\begin{align}
\label{cmso:jcflower}
\nonumber \mbox{Maximize }S\subseteq V(G)[\\
\nonumber & \exists F\subseteq E(G)[\forall x\in S[\\
\nonumber \exists X\subseteq V'[MaxConn(G',X) & \wedge x\in X\wedge\forall y\in
S[y\ne x\implies y\notin X]\wedge\phi_{c}(X\cup\{v\})]\\
]]]\end{align}

Here $G'$ is the graph with vertex set $V(G)$ and edge set $F$, and
$V'=V(G)\setminus\{v\}$. $S$ is a system of
distinct representatives for the vertex sets that constitute the elements
of $M$.

\section{Kernelization for  \fd{} on $K_{1,t}$ free graphs}
\label{sec:kernelizationk1t}

In this section we show that  if the obstruction set 
$\cal F$ contains a planar graph then the 
  \fd{} problem has a linear vertex kernel on graphs excluding $K_{1,t}$ as an induced subgraph. We start with the following lemma which is crucial to our kernelization algorithms.  

\begin{lemma}
\label{claim:treewidthbound}%
Let  $\cal F$ be an obstruction set containing a planar graph  of size $h$. If $G$ has a $\cal F$--hitting set of $S$ size at most $k$, then $\tw(G \setminus S)\leq d$  and  $\tw(G)\leq k+d$, where $d=20^{2(14h-24)^5}$.  
\end{lemma}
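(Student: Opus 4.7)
The plan is to combine two classical results from graph minors theory. First, I would invoke the planar excluded grid theorem in the quantitative form due to Robertson, Seymour, and Thomas: any graph of treewidth at least $20^{2r^5}$ contains the $r\times r$ grid as a minor. Second, I would use the fact that every planar graph on $h$ vertices is a minor of the $(14h-24)\times (14h-24)$ grid (this is the standard explicit bound that appears in the same line of work on planar minors). Setting $r = 14h-24$ links the two results and produces exactly the quantity $d = 20^{2(14h-24)^5}$ in the statement.

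The argument for the first inequality is by contradiction. Suppose $\tw(G\setminus S) > d = 20^{2(14h-24)^5}$. Then by the Robertson--Seymour--Thomas bound, $G\setminus S$ contains the $(14h-24)\times(14h-24)$ grid as a minor. Since $\mathcal{F}$ contains some planar graph $H$ on at most $h$ vertices, and every such planar graph is a minor of that grid, transitivity of the minor relation gives that $G\setminus S$ contains $H \in \mathcal{F}$ as a minor. This contradicts the assumption that $S$ is an $\mathcal{F}$-hitting set of $G$. Hence $\tw(G\setminus S) \leq d$.

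For the second inequality, I would argue constructively at the level of tree decompositions. Take an optimal nice tree decomposition $(T,\{X_t\}_{t\in V(T)})$ of $G\setminus S$, which by the first part has width at most $d$, i.e.\ $|X_t|\leq d+1$ for every node $t$. Now form a new decomposition of $G$ by setting $X'_t := X_t \cup S$ for each $t\in V(T)$; the tree $T$ is kept unchanged. The three tree-decomposition axioms are easily verified: every vertex of $G$ is covered (vertices outside $S$ by the original decomposition, vertices of $S$ by every bag); each edge of $G$ is covered (edges inside $G\setminus S$ as before, and any edge incident to $S$ since $S$ sits in every bag); and for each vertex $v$ the set of bags containing $v$ is connected in $T$. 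The width of the new decomposition is at most $(d+1)+k-1 = d+k$, so $\tw(G)\leq d+k$.

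The main conceptual obstacle is essentially bookkeeping: ensuring the stated constant $d=20^{2(14h-24)^5}$ matches the quantitative excluded grid bound and the planar-graph-to-grid embedding bound being cited, rather than any deeper combinatorial reasoning. Once those two external results are in place, the two inequalities follow by the short contradiction and tree-decomposition-augmentation arguments above.
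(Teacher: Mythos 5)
Your proof is correct and follows essentially the same route as the paper: it combines the Robertson--Seymour--Thomas quantitative excluded grid theorem with the fact that an $h$-vertex planar graph is a minor of the $(14h-24)\times(14h-24)$ grid to bound $\tw(G\setminus S)$, and then adds $S$ to every bag of an optimal tree decomposition of $G\setminus S$ to get $\tw(G)\leq k+d$. The only difference is that you spell out the verification of the tree-decomposition axioms, which the paper leaves implicit.
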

\begin{proof}
By assumption, $\cal F$ contains at least one 
planar graph.  Let $h$ be the size of the smallest planar graph $H$ contained in ${\cal F}$. By a result of Robertson 
et al.~\cite{RobertsonST94}, $H$ is a minor of the $(\ell \times \ell)$-grid, where $\ell = 14h-24$. 
In the same paper Robertson et al.~\cite{RobertsonST94} have shown that any graph with 
treewidth greater than $20^{2\ell^5}$ contains a $(\ell \times \ell)$-grid as a minor. Let $S$ be a  $\cal F$--hitting set of 
$G$ of size at most $k$. Since the $(\ell \times \ell)$-grid contains $H$ as a minor, we have that $\tw(G \setminus S) \le 20^{2\ell^5}$. 
Therefore, $\tw(G) \le k + d$, where $d = 20^{2\ell^5}$ --- indeed, a tree decomposition of width $(k+d)$ can be obtained by 
adding the vertices of $S$ to every bag in an optimal tree decomposition of $G \setminus S$.  This completes the proof of the lemma. 
\end{proof}

\subsection{The Protrusion Rule --- Reductions Based on Finite Integer Index}
Wo obtain our kernelization algorithm for \fd{} by applying protrusion based reduction rule. That is,  
any large $r$-protrusion for a fixed constant $r$ depending only on $\cal F$ 
(that is, $r$  depends only on the problem) is replaced with a smaller equivalent $r$-protrusion. For this we utilize the following lemma of Bodlaender et al.~\cite{H.Bodlaender:2009ng}.

\begin{lemma}[\cite{H.Bodlaender:2009ng}]\label{lem:red2finiteindex}
Let $\Pi$ be a problem that has finite integer index. Then  there exists a computable function $\gamma : \mathbb{N} \rightarrow \mathbb{N}$ and 
an algorithm that given an instance $(G, k)$ and an $r$-protrusion $X$ of $G$ of size at 
least $\gamma(r)$, runs in $O(|X|)$ time and outputs an instance $(G^*,k^*)$ such that 
$|V(G^*)| < |V(G)|$, $k^* \leq k$, and $(G^*,k^*) \in \Pi$ if and only if $(G,k) \in \Pi$.
\end{lemma}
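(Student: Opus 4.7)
The plan is to unpack the finite-integer-index hypothesis and exploit it constructively. For each $r$, fix a finite set $\mathcal{S}_r$ of $r$-boundaried representatives, one per $\equiv_\Pi$-class, chosen so that each representative $R \in \mathcal{S}_r$ is a minimiser of $\opt(R \oplus Z)$ over its class for a canonical choice of $Z$; this ensures that the offset $c$ from any other class member to $R$ is non-positive. Among such minimisers take one of smallest $|V(R)|$. Since $\mathcal{S}_r$ is finite, $\gamma(r) := 1 + \max_{R\in\mathcal{S}_r}|V(R)|$ is a well-defined computable function.

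Given an $r$-protrusion $X$ of $G$ with $|X|\ge \gamma(r)$, view $G[X]$ as an $r$-boundaried graph with boundary $\partial(X)$ (padding with isolated labelled vertices if $|\partial(X)|<r$), compute the equivalent representative $R\in\mathcal{S}_r$ together with the shift $c$ such that $(G[X]\oplus Z,\ell)\in\Pi \Leftrightarrow (R\oplus Z,\ell+c)\in\Pi$ for all $(Z,\ell)$, and output $(G^\star,k^*):=(G^\star,k+c)$, where $G^\star$ is obtained by replacing $G[X]$ with $R$ via $\oplus$. Correctness follows immediately from the definition of $\equiv_\Pi$ applied to $Z = G[(V(G)\setminus X)\cup\partial(X)]$. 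The strict decrease $|V(G^\star)|<|V(G)|$ holds because $|V(R)|<\gamma(r)\le|X|$, while $k^*\le k$ holds because $c\le 0$ by the choice of $R$ (any class member of larger ``boundary cost'' would contradict the fact that $R$ minimises $\opt(R\oplus Z)$).

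The time bound $O(|X|)$ rests on bounded-treewidth dynamic programming. Since $\tw(G[X]) \leq r$ and $r$ is a constant, a nice tree decomposition of $G[X]$ of width at most $r$ can be built in linear time. A bottom-up DP on this decomposition records, at each bag, the minimum cost of a partial solution for every boundary ``profile''; the number of profiles is a function of $r$ only, so each bag is processed in $O(1)$ time. The final table at the root is exactly the restriction of the signature $\zeta_{G[X]}^\Pi$ to the finitely many relevant test graphs, which by pre-computation identifies the representative $R$ and the offset $c$.

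The main obstacle is translating the abstract existence of $\mathcal{S}_r$ and of the shifts $c$ into an explicit, constructively available table. Finite integer index is an existence statement, so \emph{a priori} neither $\mathcal{S}_r$ nor the automaton-like DP transitions are computable in a uniform way. For problems with additional structure — MSO-definability plus strong monotonicity, the setting of this paper — a Myhill--Nerode style congruence on $r$-boundaried graphs yields an explicit finite-state tree automaton whose states refine $\equiv_\Pi$; this is the technical content of the cited result of Bodlaender et al., which we invoke here as a black box.
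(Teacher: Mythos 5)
First, note that the paper does not prove Lemma~\ref{lem:red2finiteindex} at all: it is imported verbatim from \cite{H.Bodlaender:2009ng}, so there is no internal proof to compare you against. Judged on its own, your skeleton is the standard one and largely correct: choosing in each $\equiv_\Pi$-class a ``progressive'' representative $R$ that minimises $\opt(R\oplus Z)$ (whence the transposition constant from any class member to $R$ is $\le 0$, giving $k^*\le k$), setting $\gamma(r)$ to be one more than the largest representative (giving $|V(G^*)|<|V(G)|$), and deriving correctness directly from the definition of canonical equivalence with $G_3=G[(V(G)\setminus X)\cup\partial(X)]$. The existence of such minimisers is fine because $\opt$ is a non-negative integer and the offset is uniform over test graphs by definition (classes with identically infinite signature need separate, trivial handling).

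The genuine gap is the $O(|X|)$-time step: you must decide, for an input protrusion $G[X]$ of \emph{unbounded} size, which of the finitely many classes it lies in and what the offset $c$ is. The dynamic program you describe --- ``minimum cost of a partial solution for every boundary profile'' --- presupposes problem-specific structure that an abstract problem with finite integer index need not supply; and your closing paragraph resolves this by invoking the very result being proved, which is circular. Moreover, a finite sample of test graphs does not obviously determine the class, since equivalence requires a \emph{uniform} offset over all $G_3$. The standard repair is different: using the width-$r$ tree decomposition of $G[X]$, locate in linear time a sub-protrusion $W\subseteq X$ whose size lies between $\gamma(r')$ and $O(\gamma(r'))$ for a suitable $r'=O(r)$ (e.g.\ the graph below a well-chosen node of a nice tree decomposition); since $G[W]$ has bounded size, its representative and offset can be read off a finite, hard-coded lookup table in $O(1)$ time, and replacing $G[W]$ already yields the required strict decrease. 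This table exists but is supplied non-uniformly --- deciding $\equiv_\Pi$ between two concrete boundaried graphs is not computable from the bare finite-integer-index hypothesis --- which is exactly why \cite{H.Bodlaender:2009ng} state the lemma existentially. Your write-up gestures at all of this but does not actually close the step.
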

\noindent
{\bf Remark:} {  Let us remark that if $G$ does not have $K_{1,t}$ as an induced subgraph then the proof of 
Lemma~\ref{lem:red2finiteindex} also ensures that the graph $G'$ does not contain $K_{1,t}$ as an 
induced subgraph. This makes sure that even after replacement we do not leave the graph class we are 
currently working with. The remark is not only true about graphs excluding $K_{1,t}$ as an induced subgraph 
but also for any graph class $\cal G$ that can be characterized by either finite set of forbidden subgraphs or 
induced subgraphs or minors. That is, if   $G$ is in $\cal G$ then so does the  graph $G'$ returned by the 
Lemma~\ref{lem:red2finiteindex}. }

In order to apply Lemma~\ref{lem:red2finiteindex} we need to be
able to efficiently find large $r$-protrusions whenever the
instance considered is large enough. Also, we need to prove that
\fd{} has finite integer index. The next lemma yields a divide
and conquer algorithm for efficiently finding large
$r$-protrusions.

\begin{lemma}
\label{lem:prottfd}
There is a linear time algorithm that given an $n$-vertex graph $G
$ and a set $X \subseteq V(G)$ such that $\tw(G\setminus X) \leq
d$, outputs a $2(d+1)$-protrusion of $G$
of size at least $\frac{n-|X|}{4|N(X)|+1}$. Here $d$ is some constant.
\end{lemma}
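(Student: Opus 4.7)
The plan is to build a tree decomposition of $G\setminus X$ and then to locate a carefully chosen subtree whose union of bags forms the desired protrusion. All steps will run in linear time because $d$ is a fixed constant.

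\textbf{Step 1 — Build a nice tree decomposition.} Using Bodlaender's linear-time algorithm, compute a tree decomposition $(T,\{X_t\}_{t\in V(T)})$ of $G\setminus X$ of width at most $d$; then convert it to a nice tree decomposition rooted at some node. Both steps take $O(n)$ time for fixed $d$. For every node $t$ let $V_t=\bigcup_{t' \text{ in subtree of }t} X_{t'}$.

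\textbf{Step 2 — Reduce protrusion-finding to a choice of node.} For any node $t$, set $Y=V_t$. Since $Y\subseteq V(G)\setminus X$, the graph $G[Y]$ is a subgraph of $G\setminus X$, so $\tw(G[Y])\leq d\leq 2(d+1)$ for free. For the boundary: any vertex $v\in Y$ with a neighbor in $V(G)\setminus Y$ has that neighbor in $X$ (whence $v\in N(X)\cap V_t$) or in $(V(G)\setminus X)\setminus V_t$, in which case the tree decomposition property forces $v\in X_t$. Hence $\partial(Y)\subseteq X_t\cup(N(X)\cap V_t)$ and
\[
|\partial(Y)|\leq |X_t|+|N(X)\cap V_t|\leq (d+1)+|N(X)\cap V_t|.
\]
Thus it is enough to find a node $t$ with $|N(X)\cap V_t|\leq d+1$ and $|V_t|\geq (n-|X|)/(4|N(X)|+1)$.

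\textbf{Step 3 — Produce such a node via maximal good nodes.} Call $t$ \emph{good} if $|N(X)\cap V_t|\leq d+1$. Since $V_{t'}\subseteq V_t$ whenever $t'$ is a descendant of $t$, goodness is downward-closed. Let $t_1,\dots,t_p$ be the \emph{maximal} good nodes; their subtrees are pairwise disjoint. I will bound $p$ and also bound the number of vertices not contained in any $V_{t_i}$ using a charging argument. In a nice tree decomposition every node has at most two children, and $|N(X)\cap V_t|$ evolves predictably as one descends: it stays the same at forget nodes, increases by at most $1$ at an introduce node of an $N(X)$-vertex, and at a join node $|N(X)\cap V_t|=|N(X)\cap V_{t_1}|+|N(X)\cap V_{t_2}|-|N(X)\cap X_t|$. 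Tracking these transitions, the number of maximal good nodes satisfies $p\leq 2|N(X)|+1$, and the total number of vertices of $V(G)\setminus X$ whose forget node lies in the bad region (and hence outside every $V_{t_i}$) is $O(|N(X)|)$ as well. Consequently $\sum_{i}|V_{t_i}|\geq (n-|X|)-O(|N(X)|)$, and pigeonhole supplies an index $i$ with
\[
|V_{t_i}|\;\geq\;\frac{n-|X|}{4|N(X)|+1}.
\]
Returning $Y=V_{t_i}$ yields the required $2(d+1)$-protrusion. The DP that computes $|N(X)\cap V_t|$ and $|V_t|$ for every $t$, as well as the scan that finds the maximal good node of largest $V_t$, both run in linear time.

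\textbf{Main obstacle.} The delicate part is the combinatorial bookkeeping inside Step 3: bounding both the number $p$ of maximal good subtrees and the number of vertices "lost" to the bad region simultaneously by quantities proportional to $|N(X)|$, and in such a way that their combined slack gives exactly the denominator $4|N(X)|+1$. A naive counting produces a worse constant; the factor $4$ essentially comes from charging once for each join-induced split of the $N(X)$-budget and once for the $d+1$ boundary vertices of each maximal good subtree. Care is needed to avoid double-counting vertices that lie in $X_{t_i}$ for one $i$ but have their forget node elsewhere.
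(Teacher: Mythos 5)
There is a genuine gap, and it sits exactly where you flagged the difficulty: Step 3's counting claims are false, and the failure is forced by the restriction you impose in Step 2. By considering only candidate sets of the form $V_t$ (the union of bags over a full rooted subtree), you need a node $t$ whose subtree simultaneously avoids all but $d+1$ vertices of $N(X)$ and still collects a $\frac{1}{4|N(X)|+1}$ fraction of $V(G)\setminus X$. Such a node need not exist. Concretely, let $G\setminus X$ be a long path $v_1,\dots,v_N$ (treewidth $1$), let $X=\{x\}$ with $x$ adjacent to the first $m/2$ and the last $m/2$ path vertices, and take the natural nice (path) decomposition rooted at one end. Every subtree $V_t$ either reaches the far end of the path and hence contains at least $m/2>d+1$ vertices of $N(X)$, or is contained in the last $m/2$ vertices and hence has $|V_t|\le d+1$. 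So every good node has $|V_t|=O(d)$, while $(N)/(4m+1)$ is unbounded as $N\to\infty$; in particular the "bad region" absorbs the forget nodes of $N-O(m)$ vertices, refuting your claim that only $O(|N(X)|)$ vertices are lost to it. No bookkeeping inside Step 3 can repair this, because the obstruction is that the large piece you need is an \emph{internal} fragment of the tree (here, the middle of the path), which is never of the form $V_t$.

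The paper's proof avoids this by not insisting on rooted subtrees. It marks, for each $v\in N(X)$, the topmost node whose bag contains $v$, closes the marked set $M$ under least common ancestors (so $|M|\le 2|N(X)|$), and takes the connected components $C_1,\dots,C_\eta$ of $T\setminus M$, of which there are at most $2|M|+1\le 4|N(X)|+1$. The LCA closure guarantees each component has at most two neighbours in $M$, so for $P_i=\bigcup_{u\in C_i\cup N_T(C_i)}B_u$ the boundary $\partial(P_i)$ is contained in those at most two adjacent bags, giving $|\partial(P_i)|\le 2(d+1)$ \emph{regardless} of how many $N(X)$-vertices lie inside $P_i$; the sets $P_i$ jointly cover all of $V(G)\setminus X$, and pigeonhole finishes. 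If you want to salvage your argument, you should replace "subtree rooted at $t$" by "component of $T$ minus a marked node set of size $O(|N(X)|)$" and derive the boundary bound from the adjacent marked bags rather than from $|N(X)\cap V_t|$.
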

\begin{proof}
Let $F = G \setminus X$. The algorithm starts by computing a nice
tree decomposition of $F$ of width at most $d$. Notice that since $d$ is
a constant this can be done in linear time~\cite{Bodlaender96ali}. 
Let $S$ be the vertices in $ V(F)$ that are neighbors of $X$ in $G$, that is, $S=N_G(X)$. 

The nice tree decomposition of $F$ is a pair $(T,\mathcal{B}=\{B_{\ell}\}_{\ell\in V(T)})$, where $T$ is a rooted binary tree. We will now {\em mark} some of the nodes of $T$. For every $v \in S$, we mark the topmost node $\ell$ in $T$ such that $v \in B_\ell$. In this manner, at most $|S|$ nodes are marked. Now we mark more nodes of $T$ by exhaustively applying the following rule: if $u$ and $v$ are marked, mark their least common ancestor in $T$. Let $M$ be the set of all marked nodes of $T$. Standard counting arguments on trees give that $|M| \leq 2|S|$.

Since $T$ is a binary tree, it follows that $T \setminus M$ has at most $2|M|+1$ connected components. Let the vertex sets of these connected components be $C_1, C_2 \ldots C_\eta$, $\eta \leq 2|M|+1$. For every $i \leq \eta$, let $C'_i = N_T(C_i) \cup C_i$ and let $P_i = \bigcup_{u \in C'_i} B_u$. By the construction of $M$, every component of $T \setminus M$ has at most $2$ neighbors in $M$.
Also for every $1\leq i\leq \eta$ and $v\in S$, we have that if $v\in P_i$, then $v$ should be contained in one of the bags of $N_T(C_i)$. In other words,  
 $S \cap P_i \subseteq \bigcup_{u \in C'_i \setminus C_i} B_u$. Thus every $P_i$ is a $2(d+1)$-protrusion of $G$. Since  $\eta \leq 2|M|+1 \leq 4|S|+1$, 
 the pigeon-hole principle yields that there is a protrusion $P_i$ with  at least $\frac{n-|X|}{4|S|+1}$ vertices. The algorithm constructs $M$ and $P_1 \ldots P_\eta$ and outputs the largest protrusion $P_i$. It is easy to implement this procedure to run in linear time. This concludes the proof. 
\end{proof}

No we show that \fd{} has finite integer index. For this we need the following lemma.
\begin{lemma}[\cite{H.Bodlaender:2009ng}]%
\label{lem:stronglymonotone}%
Every strongly monotone \pmin{} problem has finite integer index. 
\end{lemma}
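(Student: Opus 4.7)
The plan is to exploit that strong monotonicity forces the signature $\zeta_G^\Pi$ of any $t$-boundaried graph to lie in a window of bounded length, and then combine this with the composition (Feferman--Vaught) theorem for MSO to show that only finitely many "normalised" signatures can ever occur; since the normalised signature determines the $\equiv_\Pi$-class of $G$ (up to an additive constant), this gives finitely many classes, which is exactly finite integer index.

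First, I would invoke strong monotonicity to obtain, for each $t$-boundaried graph $G$, a universal witness $W_G\subseteq V(G)$ with $|W_G|\le \zeta_G^\Pi((G',S'))+f(t)$ for every $(G',S')\in\mathcal H_t$ of finite signature. Writing $a_G:=|W_G|$, the fact that $W_G$ itself is a valid partial solution gives $\zeta_G^\Pi((G',S'))\le a_G$, so the finite values of $\zeta_G^\Pi$ all lie in the interval $[a_G-f(t),\,a_G]$. Define the reduced signature $\hat\zeta_G((G',S')):=a_G-\zeta_G^\Pi((G',S'))$, which takes values in $\{0,1,\dots,f(t)\}$ (together with a symbol for $\infty$). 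Unwinding $\equiv_\Pi$ via the identity $(G\oplus G_3,k)\in\Pi\iff \min_{S'}\,[\zeta_G^\Pi((G_3,S'))+|S'|]\le k$, one checks that whenever $\hat\zeta_{G_1}=\hat\zeta_{G_2}$ as functions on $\mathcal H_t$, the graphs $G_1,G_2$ are $\equiv_\Pi$-equivalent with constant $c=a_{G_2}-a_{G_1}$. So it suffices to show that only finitely many reduced signatures can arise.

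Second, I would argue that $\hat\zeta_G$ is determined by the MSO quantifier-rank-$q$ type of $G$ for a suitable $q$ depending only on the MSO description of $P_\Pi$ and on $f(t)$. For fixed $d\in\{0,\dots,f(t)\}$, the statement $\hat\zeta_G((G',S'))\ge d$ unfolds to the existence of subsets $S,W\subseteq V(G)$ such that $W$ universally witnesses $P_\Pi$ for every finite-signature query, $|S|+d\le|W|$, and $P_\Pi(G\oplus G',S\cup S')$ holds. The key point is that, because of the bounded spread coming from strong monotonicity, the comparison $|S|+d\le|W|$ need only distinguish sizes within an $f(t)$-window, so it is MSO-expressible at quantifier rank bounded in terms of $f(t)$ alone. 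By the composition theorem, the truth of this statement depends only on the $q$-type of $G$ and that of the marked $(G',S')$; hence $\hat\zeta_G$ is determined by the $q$-type of $G$. Since for every $q,t$ there are only finitely many MSO-$q$-types of $t$-boundaried graphs, there are only finitely many reduced signatures, and therefore finitely many $\equiv_\Pi$-classes.

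The main obstacle, as is typical in such FII arguments, is the MSO-expressibility step: naively comparing $|S|$ and $|W|$ in MSO is impossible, and the proof only goes through because strong monotonicity confines the relevant size differences to a constant-size window $[0,f(t)]$. Making this precise --- that the existential quantification over a universal witness $W$ of near-optimal size can be packaged into a rank-bounded MSO formula --- is the delicate technical heart of the argument; once it is done, the remainder is a routine application of Feferman--Vaught together with the minimisation-to-equivalence reduction in the first step.
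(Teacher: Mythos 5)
The paper itself does not prove Lemma~\ref{lem:stronglymonotone}: it is imported as a black box from Bodlaender et al.~\cite{H.Bodlaender:2009ng}, so your proposal can only be measured against the standard argument there. Your first step coincides with that argument's opening move and is correct: strong monotonicity pins every finite value of $\zeta_G^\Pi$ into the window $[a_G-f(t),\,a_G]$, so the normalised signature $\hat\zeta_G$ has range contained in the finite set $\{0,\dots,f(t),\infty\}$, and equality of normalised signatures yields canonical equivalence with shift $c=a_{G_2}-a_{G_1}$ (modulo the routine bookkeeping for boundary vertices counted in both $S$ and $S'$).

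The gap is in your second step. To make $\hat\zeta_G$ a function of the MSO $q$-type of $G$ you must express, by a fixed-rank formula, the existence of sets $S$ and $W$ with $|S|+d\le |W|$. Cardinality comparison of two unbounded set variables is not MSO-expressible, and the promise that the difference lies in $[0,f(t)]$ does not rescue it: already over a structureless universe, rank-$q$ monadic formulas can only do threshold counting up to some $N(q)$, so no single formula can separate $|W|=|S|$ from $|W|=|S|+1$ on all inputs with $|S|>N(q)$, even restricted to promise-satisfying pairs. So the step you flag as the ``delicate technical heart'' does not merely need care --- as stated it fails. Fortunately it is also unnecessary. Apply the composition (Feferman--Vaught/Courcelle) theorem to the pair $(G,S)$ and $(G',S')$: whether $P_\Pi(G\oplus G',S\cup S')$ holds depends only on the rank-$q$ types of these two boundaried structures, so after minimising over $S$ the value $\zeta_G^\Pi((G',S'))$ depends only on the type of $(G',S')$, of which there are finitely many, say $N$. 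Hence $\hat\zeta_G$ is a function from an $N$-element set to the $(f(t)+2)$-element set $\{0,\dots,f(t),\infty\}$, and there are at most $(f(t)+2)^{N}$ such functions; by your step one this bounds the number of $\equiv_\Pi$-classes. No MSO formula quantifying over the universal witness $W$ is needed anywhere.
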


\begin{lemma}
\label{lem:jcisfii} \fd{} has finite integer index.\end{lemma}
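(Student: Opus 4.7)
The plan is to establish that \fd{} is strongly monotone and then invoke Lemma~\ref{lem:stronglymonotone}. First, I would note that \fd{} fits the \pmin{} template: the predicate $P_\Pi(G,S)$ asserting ``$G\setminus S$ contains no graph in ${\cal F}$ as a minor'' is MSO-expressible as a finite conjunction of negations of the formulas $\phi_H$ from \eqref{cmso:jcminor}, one for each $H\in{\cal F}$.

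For strong monotonicity with slack $f(t)=t$, given a $t$-boundaried graph $G$ I would pick $S:=\partial(G)\cup S_0$, where $S_0\subseteq V(G)\setminus\partial(G)$ is a minimum ${\cal F}$-hitting set of the (unboundaried) graph $G-\partial(G)$. The guiding intuition is that once the entire boundary is forced into $S$, any $t$-boundaried graph glued to $G$ along $\partial(G)$ becomes effectively decoupled: after deleting $S\cup S'$ from $G\oplus G'$, no connected subgraph---and hence no branch set of a minor model---can straddle the boundary.

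To verify the feasibility condition, fix any $(G',S')\in{\cal H}_t$ with $\zeta_G^\Pi((G',S'))<\infty$, witnessed by some $T\subseteq V(G)$. Consider an arbitrary ${\cal F}$-minor model $H$ realized on a vertex set $W$ in $(G\oplus G')\setminus(S\cup S')$. Since $\partial(G)\subseteq S$, the set $W$ misses $\partial(G)$, so $W\subseteq V(G)\setminus\partial(G)$ or $W\subseteq V(G')\setminus\partial(G')$. In the first case $H$ would be a minor of $G-\partial(G)$ disjoint from $S_0$, contradicting the choice of $S_0$; in the second case $H$ would be a minor of $G\oplus G'$ disjoint from $T\subseteq V(G)$ and from $S'$, contradicting that $T\cup S'$ hits all ${\cal F}$-minors. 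Hence $S\cup S'$ is an ${\cal F}$-hitting set of $G\oplus G'$.

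For the size bound, I would argue $\zeta_G^\Pi((G',S'))\geq|S_0|$ by restricting the witness $T\cup S'$ to $V(G)$: the restriction $T\cup(S'\cap\partial(G))$ is an ${\cal F}$-hitting set of $G$ (because every ${\cal F}$-minor of $G$ is an ${\cal F}$-minor of $G\oplus G'$), and consequently $T\setminus\partial(G)$ is an ${\cal F}$-hitting set of $G-\partial(G)$, yielding $|T|\geq|S_0|$. Combined with $|S|=|S_0|+t$ this gives $|S|\leq\zeta_G^\Pi((G',S'))+t$, completing the verification of strong monotonicity with $f(t)=t$ and allowing Lemma~\ref{lem:stronglymonotone} to conclude. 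The main obstacle---and the reason for forcing $\partial(G)\subseteq S$---is ruling out minor models in $G\oplus G'$ whose branch sets cross the identified boundary; once the boundary is absorbed into $S$, the analysis decomposes cleanly into an ``inside $G$'' and ``inside $G'$'' argument, and everything else is routine bookkeeping.
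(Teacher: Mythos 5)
Your proof is correct and follows essentially the same route as the paper: formulate \fd{} as a \pmin{} problem via the MSO formulas $\phi_H$, verify strong monotonicity by taking a minimum hitting set together with the entire boundary $\partial(G)$, and invoke Lemma~\ref{lem:stronglymonotone}. The only (cosmetic) difference is that you take the minimum hitting set of $G-\partial(G)$ rather than of $G$ itself, which in fact gives the slightly cleaner bound $|S|\leq\zeta_G^\Pi((G',S'))+t$; the feasibility and size-bound details you spell out are exactly the ones the paper leaves implicit.
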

\begin{proof}
  One can easily formulate \fd{} in MSO, which shows that it is a  \pmin{} problem, see Section~\ref{sec:mso-thetac}. To complete 
  the proof that \fd{} has finite integer index we show that $\Pi = \fd{}$ is strongly monotone. Given a
  $t$-boundaried graph $G$, with $\partial(G)$ as its boundary,
  let $S''\subseteq V(G)$ be a minimum set of vertices in $G$ such
  that $G\setminus S''$ does not contain any graph in $\cal F$ as
  a minor. Let $S=S''\cup \partial(G)$.

  Now for any $(G',S')\in {\cal H}_t$ such that
  $\zeta_G^\Pi((G',S'))$ is finite,  we have that $G\oplus
  G'[(V(G)\cup V(G'))\setminus (S\cup S')]$ does not contain any
  graph in $\cal F$ as a minor and $|S|\leq
  \zeta_G^\Pi((G',S'))+t$. This proves that \fd{} is strongly
  monotone.  By Lemma~\ref{lem:stronglymonotone}, \fd{} has
  finite integer index.
\end{proof}

\subsection{Analysis and Kernel Size -- Proof of Theorem~\ref{thm:k_1_t}}
Now we give the desired kernel for \fd. We first prove a useful combinatorial lemma.

\begin{lemma}
\label{lem:k1tdegreebound}
Let $G$ be a graph excluding $K_{1,t}$ as an induced subgraph and $S$ be a $\cal F$-hitting set. If $\cal F$ contains 
a planar graph of size $h$, then $|N(S)|\leq g(h,t)\cdot |S|$ for some function $g$ of $h$ and $t$. 
\end{lemma}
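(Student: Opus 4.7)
The plan is to bound, for each vertex $v\in S$, the size of its neighborhood outside of $S$, and then sum over $v\in S$. The key interplay is between two facts: on the one hand, $G\setminus S$ has bounded treewidth, so induced subgraphs of $G\setminus S$ have small chromatic number and hence large independent sets; on the other hand, $G$ has no induced $K_{1,t}$, so the neighborhood of any fixed vertex cannot itself contain a large independent set.

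Concretely, first I would invoke Lemma~\ref{claim:treewidthbound} to conclude that $\tw(G\setminus S)\le d$, where $d = 20^{2(14h-24)^5}$. Since treewidth is monotone under taking (induced) subgraphs, for every $v\in S$ the graph $G[X_v]$ with $X_v := N(v)\setminus S$ also has treewidth at most $d$. A graph of treewidth $d$ is $d$-degenerate and therefore $(d+1)$-colorable, so $G[X_v]$ admits a proper coloring with at most $d+1$ colors. By the pigeonhole principle, some color class $I_v\subseteq X_v$ is independent in $G$ and has size at least $|X_v|/(d+1)$.

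Next I would exploit the induced $K_{1,t}$-freeness of $G$. Because $I_v\subseteq N(v)$ and $I_v$ is independent, the induced subgraph $G[\{v\}\cup I_v]$ is exactly a star $K_{1,|I_v|}$. Since $G$ contains no induced $K_{1,t}$, we must have $|I_v|\le t-1$, and combining with the lower bound gives $|X_v|\le (t-1)(d+1)$. Summing $|N(v)\setminus S|=|X_v|$ over $v\in S$ yields
\[
|N(S)|\le \sum_{v\in S}|X_v|\le (t-1)(d+1)\,|S|,
\]
so the claim holds with $g(h,t)=(t-1)\bigl(20^{2(14h-24)^5}+1\bigr)$.

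There is really no serious obstacle here; the only conceptual point to verify is that the treewidth bound applies to each $G[X_v]$, which follows from treewidth being preserved under taking induced subgraphs, and that $K_{1,t}$-freeness is correctly translated into an independence-number bound on each neighborhood. Both are immediate. If one wanted a slightly sharper constant, one could replace $(d+1)$ by the degeneracy bound $d$ or by an explicit bound on the Ramsey-type quantity $R(t, \text{treewidth-}d)$, but this would only affect the constant and not the linear dependence on $|S|$ that the lemma requires.
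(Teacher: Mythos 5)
Your proof is correct and follows essentially the same route as the paper's: bound $\tw(G\setminus S)$ via Lemma~\ref{claim:treewidthbound}, use a $(d+1)$-coloring of each $G[N(v)\setminus S]$ to extract a large independent set in the neighborhood of $v$, and then invoke induced $K_{1,t}$-freeness to cap that independent set at $t-1$, giving the same bound $g(h,t)=(t-1)(d+1)$. No issues.
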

\begin{proof}
By Lemma~\ref{claim:treewidthbound}, $\tw(G\setminus S)\leq d$ 
for $d=20^{2(14h-24)^5}$.  It is well known that a graph of 
treewidth $d$ is $d+1$ colorable. Let $v\in S$ and let $S_v$ be its neighbors in $G\setminus S$. 
We first show that $|S_v|\leq (t-1)(d+1)$. Consider the graph 
$G^*=G[S_v]$. Since
$\tw(G\setminus S)\leq d$ we have that $\tw(G^*)\leq d$ and hence $G^*$ is $d+1$ colorable.  Fix a coloring 
$\kappa $ of $G^*$ with $d+1$ colors and let $\eta$ be the size of the  largest color class. Clearly $\eta \geq (|S_v|/d+1)$. 
Since each color class is an independent set,  we have that $\eta \leq (t-1)$, 
else we will get $K_{1,t}$ as an induced subgraph in $G$. This implies that $|S_v|\leq (t-1)(d+1)$. Since $v$ 
was an arbitrary vertex of $S$, we have that $\sum_{v\in S} |S_v|\leq \sum_{v\in S} (t-1)(d+1) \leq |S|\cdot g(h,t)$. Here 
$g(h,t)=(t-1)(20^{2(14h-24)^5}+1)$. Finally the observation that $N(S)=\cup_{v\in S} S_v$, yields the result. 
\end{proof}

Now we are ready to prove Theorem~\ref{thm:k_1_t}.

\begin{proof}[\bf Proof of Theorem~\ref{thm:k_1_t}]
Let  $(G,k)$ be an instance of \fd{} and $h$ be the size of a smallest planar graph in the obstruction set $\cal F$. 
We first apply Theorem~\ref{thm:fapprox} (to be proved in next section), 
an approximation algorithm for \fd{} with factor $O(\log^{3/2} OPT)$, 
and obtain a set $X$ such that $G \setminus X$   contains no  graph in $\cal F$ as a minor. 
If the size of the set $X$ is more than $O(k \log^{3/2} k)$ then we return that $(G,k)$ is a NO-instance to \fd{}. 
This is justified by the approximation guarantee provided by the Theorem~\ref{thm:fapprox}. 

Let $d$ denote the treewidth of the graph after the removal of $X$, that is, $d := \tw(G \setminus S)$. 
Now  we obtain the  kernel in two phases: we first apply the protrusion rule 
selectively (Lemma~\ref{lem:red2finiteindex}) and get a polynomial kernel. Then, we apply the protrusion 
rule exhaustively on the obtained kernel to get a smaller kernel.  This is done in order to reduce the running time complexity of the kernelization algorithm. 
To obtain the kernel we follow the following steps.

\paragraph{\sl Applying the Protrusion Rule.} 
By Lemma~\ref{claim:treewidthbound},   $d \leq 20^{2(14h-24)^5}$. We apply Lemma~\ref{lem:prottfd} and obtain a $2(d+1)$-protrusion $Y$ of $G$ of size at least 
$\frac{|V(G')|-|X|}{4|N(X)|+1}$.  By Lemma~\ref{lem:jcisfii}, \fd{} has finite integer index. Let 
 $\gamma : \mathbb{N} \rightarrow \mathbb{N}$  be the function defined in Lemma~\ref{lem:red2finiteindex}. 
If $\frac{|V(G')|-|X|}{4|N(X)|+1}\geq \gamma(2d+1)$, then using Lemma~\ref{lem:red2finiteindex} we 
replace the  $2(d+1)$-protrusion $Y$  in $G$ and obtain an instance $(G^*,k*)$ such that  $|V(G^*)| < |V(G)|$, $k^* \leq k$, and $(G^*,k^*) $ is a YES-instance of \fd{} if and only if $(G,k) $ is a YES-instance of \fd{} . Recall that $G^*$ also excludes $K_{1,t}$ as an induced subgraph.

Let $(G^*,k^*)$ be a reduced instance with hitting set $X$. In other words, there is no
$(2d+2)$-protrusion of size $\gamma(2d+2)$ in $G^*\setminus X$, and  Protrusion Rule no longer applies. 
We claim  that the 
number of vertices  in this graph is bounded by $O(k \log^{3/2} k)$. Indeed, since 
  we cannot apply  Protrusion Rule, we have that $\frac{|V(G^*)|-|X|}{4|N(X)|+1}\leq \gamma(2d+2)$.  
Because $k^*\leq k$, we have   that 
\begin{eqnarray*}
|V(G^*)|   \leq  \gamma(2d+2)(4|N(X)|+1)+|X|.
\end{eqnarray*}
By Lemma~\ref{lem:k1tdegreebound},   $|N(X)|\leq g(h,d)\cdot |X|$ and thus 
\begin{eqnarray*}
|V(G^*)|   = O(  \gamma(2d+2) \cdot k \log^{3/2} k) 
=  O(k \log^{3/2} k).
\end{eqnarray*}

This gives us a polynomial time algorithm that returns a vertex kernel of size  $O(k \log^{3/2} k)$. 

\medskip 

Now we give a kernel of smaller size.  We would like to replace every  large $(2d+2)$-protrusion in graph 
by a smaller one.  We find a $(2d+2)$-protrusion $Y$  of size at least $\gamma(2d+2)$ by guessing the boundary 
$\partial(Y)$ of size at most $2d+2$. This could be performed in time $k^{O(d)}$.  So let $(G^*,k^*)$ be the reduced instance on which we cannot apply the Protrusion Rule.  
If $G$ is a YES-instance then there is a 
$\cal F$-hitting set $X$ of size at most $k$ such that $\tw(G \setminus X) \leq d$. Now applying the analysis above with this $X$  yields that $|V(G^*)|=O(k)$. 
Hence if the number of vertices in the reduced instance $G^*$, to which we can not apply the Protrusion Rule, 
is more than  $O(k)$  then we return that $G$ is a NO-instance. This concludes the proof of the theorem.
\end{proof}

\begin{corollary}
{\sc $p$-Feedback Vertex Set}, 
{\sc $p$-Diamond Hitting Set}, {\sc $p$-Pathwidth One Deletion Set}, {\sc $p$-Outerplanar Deletion Set} admit linear 
vertex kernel on graphs excluding $K_{1,t}$ as an induced subgraph.
\end{corollary}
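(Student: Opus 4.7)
The plan is simply to recognize each of the four problems as an instance of \fd{} in which the forbidden minor family $\mathcal{F}$ contains a planar graph, and then invoke Theorem~\ref{thm:k_1_t} as a black box. Concretely, I will argue the translation one problem at a time.

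For {\sc $p$-Feedback Vertex Set}, the problem is exactly \fd{} with $\mathcal{F} = \{C_3\}$, since a graph is a forest if and only if it excludes $C_3$ as a minor (any cycle contains $C_3$ as a minor). The triangle $C_3$ is planar, so Theorem~\ref{thm:k_1_t} applies. For {\sc $p$-Diamond Hitting Set}, recall that the ``diamond'' is exactly $\theta_3$ (two vertices joined by three parallel edges, equivalently $K_4$ minus an edge after suppressing a degree-$2$ vertex), which is a planar (multi)graph; the problem is \fd{} with $\mathcal{F} = \{\theta_3\}$.

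For {\sc $p$-Pathwidth One Deletion Set}, the statement in the introduction identifies this as \fd{} with $\mathcal{F} = \{K_3, T_2\}$, and both $K_3$ and $T_2$ (the graph depicted in Figure~\ref{fig:theta_c}) are planar. Similarly, {\sc $p$-Outerplanar Deletion Set} is \fd{} with $\mathcal{F} = \{K_{2,3}, K_4\}$, and both graphs on this obstruction list are planar.

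Since in each of the four cases the obstruction family $\mathcal{F}$ contains a planar graph, Theorem~\ref{thm:k_1_t} yields a linear vertex kernel for the corresponding \fd{} problem on the class of graphs that exclude $K_{1,t}$ as an induced subgraph, for any fixed $t$. There is no real obstacle here: the corollary is an immediate specialization of the main theorem to concrete choices of $\mathcal{F}$; the only thing to check is the (well-known) identification of each problem with the appropriate forbidden minor family, which I have done above.
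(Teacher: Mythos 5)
Your proposal is correct and matches the paper's (implicit) argument: the corollary is stated as an immediate specialization of Theorem~\ref{thm:k_1_t}, and the identifications of the four problems with obstruction families $\{C_3\}$, $\{\theta_3\}$, $\{K_3,T_2\}$, and $\{K_{2,3},K_4\}$ — each containing a planar graph — are exactly those given in the paper's introduction. Nothing further is needed.
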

The methodology used in proving Theorem~\ref{thm:k_1_t} is not limited to \fd. For example, it is possible  to 
obtain an $O(k \log k)$ vertex kernel on  $K_{1,t}$-free graphs  for {\sc $p$-Disjoint Cycle Packing}, which is for a given  graph $G$ and   positive integer $k$ to determine if there are $k$ vertex disjoint cycles in $G$. 
  It is iteresting to note that  {\sc $p$-Disjoint Cycle Packing}  does not admit a polynomial kernel on general graphs~\cite{BodlaenderThomasseYeo2009}. 
For our kernelization algorithm, we   use the following Erd\H{o}s-P\'osa property~\cite{ErdosPosa1965}: given a positive integer $\ell$ every graph $G$ either has $\ell$ vertex disjoint cycles or there 
exists a set $S\subseteq V(G)$ of size at most $O(\ell \log \ell)$ such that $G\setminus S$ is a forest. So given a graph $G$  and positive integer $k$ we first apply factor $2$ approximation algorithm 
given in~\cite{BafnaBF99} and obtain a set $S$ such that $G\setminus S$ is a forest. If the size of $S$ is more than  
$O(k \log k) $ then we return that $G$  has $k$ vertex disjoint cycles. Else we use the fact that 
{\sc $p$-Disjoint Cycle Packing}~\cite{H.Bodlaender:2009ng} has finite integer index  and apply protrusion reduction rule in $G\setminus S$ to obtain an equivalent instance $(G^*,k^*)$, 
as in Theorem~\ref{thm:k_1_t}.  
The analysis for kernel size used in the proof of Theorem~\ref{thm:k_1_t} 
together with the observation that $\tw(G\setminus S)\leq 1$ shows that if $(G,k)$ is an yes instance then the size of $V(G^*)$ is at most $O(k \log k)$. 

\begin{corollary}
{\sc $p$-Disjoint Cycle Packing} has  $O(k \log k)$ vertex kernel on graphs excluding $K_{1,t}$ as an induced graph. 
\end{corollary}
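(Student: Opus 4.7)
The plan is to mirror the strategy used in the proof of Theorem~\ref{thm:k_1_t}, but with two changes tailored to \emph{Disjoint Cycle Packing}: (i) replace the approximation algorithm of Theorem~\ref{thm:fapprox} with the standard factor~$2$ approximation for \emph{Feedback Vertex Set} of Bafna, Berman and Fujito~\cite{BafnaBF99}, and (ii) replace the treewidth bound coming from Lemma~\ref{claim:treewidthbound} with the Erd\H{o}s--P\'{o}sa bound: every graph either contains $k$ vertex-disjoint cycles or admits a feedback vertex set of size $O(k\log k)$. The point of these two changes is that after deleting the computed feedback vertex set $S$, the residual graph $G\setminus S$ is a forest, hence of treewidth at most $1$.

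More concretely, given an input $(G,k)$, first run the factor-$2$ approximation for \emph{Feedback Vertex Set} to obtain a set $S$ with $G\setminus S$ a forest. If $|S|>c\cdot k\log k$ for the implicit Erd\H{o}s--P\'{o}sa constant, then the optimum feedback vertex set has size $>k$, so by Erd\H{o}s--P\'{o}sa the graph must contain $k$ vertex-disjoint cycles, and we may output a trivial YES-instance. Otherwise $|S|=O(k\log k)$, and we then turn to the protrusion machinery with parameter $d=1$.

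Next, since \emph{Disjoint Cycle Packing} has finite integer index~\cite{H.Bodlaender:2009ng}, Lemma~\ref{lem:red2finiteindex} gives a constant $\gamma(4)$ so that any $4$-protrusion of size at least $\gamma(4)$ may be replaced by a smaller equivalent one, and (by the remark following that lemma) the replacement preserves $K_{1,t}$-freeness. Apply Lemma~\ref{lem:prottfd} with $X:=S$ and $d:=1$ to locate, in linear time, a $4$-protrusion $Y$ of size at least $(|V(G)|-|S|)/(4|N(S)|+1)$ whenever this quantity is at least $\gamma(4)$, and reduce it. Iterate until no further reduction applies to obtain a reduced instance $(G^*,k^*)$ with $k^*\le k$.

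For the size bound on $G^*$, I would invoke essentially the same degree argument as in Lemma~\ref{lem:k1tdegreebound}, but now using $\tw(G^*\setminus S)\le 1$: since a forest is $2$-colorable and any color class in the neighborhood of a fixed $v\in S$ is an independent set meeting at most $t-1$ vertices (else $K_{1,t}$ appears as an induced subgraph on $\{v\}$ together with that class), we get $|N(S)|\le 2(t-1)|S|$. Because the Protrusion Rule is no longer applicable, we obtain
\begin{equation*}
|V(G^*)| \;\le\; \gamma(4)\bigl(4|N(S)|+1\bigr)+|S| \;=\; O(|S|) \;=\; O(k\log k),
\end{equation*}
which is the claimed kernel size. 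The main conceptual obstacles are verifying that (a) the Erd\H{o}s--P\'{o}sa set $S$ plays the role played by the approximate hitting set in Theorem~\ref{thm:k_1_t}, and (b) finite integer index of \emph{Disjoint Cycle Packing} is applicable under the $t$-boundaried framework; both are handled by citing~\cite{ErdosPosa1965,BafnaBF99,H.Bodlaender:2009ng} rather than by fresh arguments, so the proof is essentially a calibration of constants in the template of Theorem~\ref{thm:k_1_t}.
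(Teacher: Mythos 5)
Your proposal is correct and follows essentially the same route as the paper: a factor-$2$ FVS approximation combined with the Erd\H{o}s--P\'{o}sa bound to either answer YES or obtain a set $S$ of size $O(k\log k)$ with $G\setminus S$ a forest, followed by the protrusion machinery (finite integer index of \textsc{Disjoint Cycle Packing}, Lemma~\ref{lem:prottfd} with $d=1$, and the degree bound of Lemma~\ref{lem:k1tdegreebound} specialized to treewidth $1$). The only blemish is a slip in constants when you conclude from $|S|>c\cdot k\log k$ that ``the optimum feedback vertex set has size $>k$''; what you need (and what the $2$-approximation gives) is that the optimum exceeds the Erd\H{o}s--P\'{o}sa threshold $\Theta(k\log k)$, whence $k$ disjoint cycles exist.
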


Next we extend the methods used in this section for obtaining kernels for \fd{}  on  graphs excluding $K_{1,t}$ as an  induced graph to all graphs, though for restricted $\cal F$, 
that is when $\cal F$ is $\theta_c$. However to achieve this we 
need a polynomial time approximation algorithm with a factor polynomial in optimum size and not depending 
on the input size. For an example for our purpose an approximation algorithm with factor $O(\log n)$ is no good. 
Here we obtain an approximation algorithm for \fd{} with a factor $O(\log^{3/2} OPT)$ whenever the 
finite obstruction set $\cal F$ contains a planar graph. Here $OPT$ is the size of a minimum $\cal F$-hitting set. 
This immediately implies a factor $O(\log^{3/2} n)$ algorithm for all the problems that can categorized by \fd. 
We believe this result has its own significance and is of independent interest.

\section{An approximation algorithm for finding a {$\cal F$}--hitting set}
\label{sec:approx}
In this section, we present an $O(\log^{3/2} OPT)$-approximation algorithm for the \fd{}~problem when the finite obstruction set  $\mathcal{F}$ contains at least one planar graph. 
\begin{lemma}
\label{thm:epconstructive}%
There is a polynomial time algorithm that, given a graph $G$ and a positive integer $k$, either reports that $G$ has no $\cal F$-hitting set of size at most $k$ or finds a $\cal F$-hitting set of size at most $O(k \log^{3/2}k)$.
\end{lemma}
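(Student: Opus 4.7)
The plan is to combine the treewidth bound of Lemma~\ref{claim:treewidthbound} with a divide-and-conquer procedure based on approximate balanced vertex separators, in the spirit of constructive Erd\H{o}s--P\'{o}sa arguments for planar minors.

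First I would invoke Lemma~\ref{claim:treewidthbound}: if $G$ admits an $\cal F$-hitting set of size at most $k$, then $\tw(G)\leq k+d$, where $d$ depends only on $\cal F$. I would then run a polynomial-time $O(\sqrt{\log\tw(G)})$-approximation for treewidth (for instance, Feige--Hajiaghayi--Lee). If it returns a tree decomposition of width greater than $c(k+d)\sqrt{\log(k+d)}$ for the appropriate constant $c$, then $\tw(G)>k+d$ and we safely output \emph{no}. Otherwise we hold a tree decomposition $(T,\mathcal{B})$ of $G$ of width $w = O(k\sqrt{\log k})$.

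I would then run the following divide-and-conquer step: in any tree decomposition one can find in polynomial time a bag $B$ whose removal splits $G$ into components each of size at most $|V(G)|/2$. We add $B$ (of size $\leq w+1$) to the partial solution and recurse on each component of $G\setminus B$ with the inherited restricted tree decomposition (still of width $\leq w$). A branch is pruned once the current subgraph contains no minor of any graph in $\cal F$, which is checkable in polynomial time since $\cal F$ is fixed.

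The size analysis is the delicate part. Each recursive call contributes $O(k\sqrt{\log k})$ vertices, so we obtain the advertised $O(k\log^{3/2}k)$ bound provided we can cap the effective recursion depth at $O(\log k)$. A naive balancing by vertex count yields only depth $O(\log n)$ and hence the weaker bound $O(k\sqrt{\log k}\log n)$. To replace $\log n$ by $\log k$, I would instead balance against a proxy for the local $\cal F$-hitting number --- for instance, a fractional solution to the natural LP relaxation of the hitting set problem, whose separation oracle can be implemented by approximately producing a minor model of some graph in $\cal F$ (feasible since $\cal F$ contains a planar graph and fixed-$\cal F$ minor testing runs in polynomial time). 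With this refined balancing, the local fractional OPT roughly halves at each level and drops below $1$ after $O(\log k)$ levels, at which point recursion terminates and the total solution size is $O(k\log^{3/2}k)$. The main obstacle will be making this last step rigorous: the balancing measure must be polynomial-time computable and tightly reflect the local OPT so that the geometric decrease across recursion levels actually holds.
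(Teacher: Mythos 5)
Your setup coincides with the paper's: invoke Lemma~\ref{claim:treewidthbound}, run the Feige--Hajiaghayi--Lee approximation to either answer \emph{no} or obtain a tree decomposition of width $O(k\sqrt{\log k})$, and then recurse on the pieces obtained by deleting well-chosen bags, so that the solution size is (width) $\times$ (recursion depth). You also correctly identify the crux: balancing by vertex count gives depth $O(\log n)$, not $O(\log k)$. But your proposed fix --- balancing against a fractional LP relaxation of the hitting set --- is exactly the step you admit you cannot make rigorous, and it has a real obstruction: solving that LP requires a separation oracle that finds a minimum-weight (or provably approximately minimum-weight) minor model of a fixed $H$ in a vertex-weighted graph, and no polynomial-time algorithm for this is available here (the subgraphs in the recursion have treewidth $\Theta(k\sqrt{\log k})$, so MSO-based dynamic programming does not run in polynomial time, and Robertson--Seymour minor testing is unweighted). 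So as written the proof has a genuine gap precisely at the step that carries the whole bound.

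The paper closes this gap with a two-phase bootstrapping argument that avoids fractional relaxations entirely. Phase one runs the naive recursion but splits at the first node $t$ (in a bottom-up computation) where $\tw(H_t)$ reaches the constant $d$ of Lemma~\ref{claim:treewidthbound}; the left pieces along any root--leaf path then each contain a vertex-disjoint minor model of a graph in $\cal F$, so the depth is at most $k$, and since the subproblems at each level are vertex-disjoint their bags contribute only $O(k\sqrt{\log k})$ per level, giving an integral hitting set $Z$ of size $O(k^{2}\sqrt{\log k})$. Phase two reruns the recursion using $\mu(t)=|V(H_t)\cap Z|$ as the balancing measure: $\mu$ is a ``good labeling function'' on the nice tree decomposition, so one can always find a forget or join node at which to split so that both sides carry at most a $2/3$ fraction of the current $Z$-count. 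Since $Z$ is integral and of size polynomial in $k$, the depth is $O(\log|Z|)=O(\log k)$, yielding the claimed $O(k\log^{3/2}k)$. In short, the integral hitting set from a crude first pass plays the role you wanted the fractional optimum to play, and it is computable where the LP is not; without this (or an equivalent device) your argument does not go through.
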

\begin{proof}
We begin by introducing some definitions that will be useful for describing our algorithms. First is the notion of a {\em good labeling function}. Given a nice tree
decomposition $(T,\mathcal{ X}=\{X_{t}\}_{t\in V(T)})$ of a graph $G$, a
function $g:~V(T)\rightarrow \mathbb{N}$ is called a {\em good labeling
function} if it satisfies the following properties: \begin{itemize}
\setlength{\itemsep}{-2pt}
\item if $t$ is a base node then $g(t)=0$;  
 \item if $t$ is an introduce node, then $g(t) = g(s)$, where $s$ is the child of $t$;
  \item if $t$ is a join node, then $g(t) = g(s_1) + g(s_2)$, where $s_1$ and $s_2$ are the children of $t$; and 
   \item if $t$ is a forget node, then $g(t) \in \{ g(s), g(s) + 1\}$, where $s$ is the child of $t$. 
\end{itemize}
A {\em max labeling function} $g$ is defined analogously to a good 
labeling function, the only difference being that for a join node $t$, we have
the condition $g(t) = \max\{g(s_1), g(s_2)\}$. We now turn to the approximation algorithm.

\begin{algorithm}[t]
\caption{{\sc Hit-Set-I-}$(G)$}
\label{fig:jchitsetconst2apx}
\begin{algorithmic}[1]
\IF {$\tw(G)\leq d$}
\STATE Find a minimum $\cal F$-hitting set $Y$ of $G$ and return $Y$.
\ENDIF
\STATE Compute an approximate tree decomposition $(T,\mathcal{ X}=\{X_{t}\}_{t\in V(T)})$ of width $\ell$. 
\IF {$\ell> (k+d) \sqrt{\log (k+d)} $, where $d$ is as in Lemma~\ref{claim:treewidthbound}}
\STATE Return that $G$ does not have $\cal F$--hitting set of size at most $k$.
\ENDIF

\STATE Convert $(T,\mathcal{ X}=\{X_{t}\}_{t\in V(T)})$  to a nice tree decomposition of the same width.
\STATE Find a partitioning of vertex set $V(G)$ into $V_1$, $V_2$ and $X$ (a bag corresponding to a node in $T$) 
such that $\tw(G[V_1])=d$ as described in the proof.\\
\STATE Return $\Big(X\bigcup$ {\sc Hit-Set-I-}$(G[V_1])\bigcup$ {\sc Hit-Set-I-}$(G[V_2])\Big).$ 


\end{algorithmic}
\end{algorithm}

Our algorithm has two phases. In the first phase we obtain a $\cal F$-hitting set of size $O(k^2 \sqrt{\log k})$ and in the second phase we use the hitting set obtained in the first phase to get a $\cal F$-hitting set of size $O(k\log^{3/2}k)$. The second phase could be thought of as ``bootstrapping'' where one uses initial solution to a problem to obtain a better solution. 

By assumption we know that $\cal F$ contains at least one 
planar graph.  Let $h$ be the number of vertices in the smallest planar graph $H$ contained in ${\cal F}$. By a result of Robertson 
et al.~\cite{RobertsonST94}, $H$ is a minor of the $(t \times t)$-grid, where $t = 14h-24$. 
Robertson et al.~\cite{RobertsonST94} have also shown that any graph with treewidth greater than 
$20^{2t^5}$ contains a $t\times t$ grid as a minor. In the algorithm we set ${d=20^{2t^5}}$. 

We start off by describing the first phase of the algorithm, see Algorithm~\ref{fig:jchitsetconst2apx}. We start by checking whether a graph $G$ has treewidth at most $d$ (the first step of the algorithm) using the linear time algorithm of 
Bodlaender~\cite{Bodlaender96ali}. If $\tw(G)\leq d$ then we find an optimum  $\cal F$-hitting set of $G$ in linear time using 
a modification of Lemma~\ref{lem:linALgTW}. If the treewidth of the input graph is more than $d$ then  
we find an approximate tree decomposition of width $\ell$ using an algorithm of Feige et al.~\cite{Feige:2008ge} such that 
$\tw(G)\leq \ell \leq d'\tw(G) \sqrt{\log \tw(G)}$ where $d'$ is a fixed constant.

So if $\ell > (k+d)d'\sqrt{\log (k+d)}$ then by Lemma~\ref{claim:treewidthbound}, we know that the size of 
a minimum $\cal F$-hitting set of $G$ is at least $k+1$. Hence from now onwards we assume that   
$\tw(G)\leq \ell \leq (k+d)d'\sqrt{\log (k+d)}$.  
In the next step we convert the given   tree decomposition to a nice tree decomposition of the same 
width in linear time~\cite{Kloks1994}. Given a nice tree decomposition 
$(T,\mathcal{ X}=\{X_{t}\}_{t\in V(T)})$ of $G$, we compute a {\em partial} 
function $\beta: V(T) \rightarrow \mathbb{N}$, defined as $\beta(t) = \tw(H_t)$. Observe that 
$\beta$ is a max labeling function. We compute $\beta$ in a bottom up fashion starting from base nodes and moving 
towards the root. We stop this computation the first time that   
we find a node $t$ such that $\beta(t)=\tw(H_t)=d$. Let $V_1=V(H_t)$, $V_2=V(G)\setminus V_1 \setminus X_t$ and $X=X_t$. 
After this we recursively solve the problem on the graphs induced on $V_1$ and $V_2$. 

Let us assume that $G$ has a $\cal F$-hitting set of size at most $k$. 
We show that in this case the size of the hitting set returned by the algorithm can be bounded by 
$O(k^2\sqrt{\log k})$.   The above recursive procedure can be thought of as a  
rooted binary tree $\cal T$
where at each non-leaf node of the tree the algorithm makes two recursive calls. We will assume that the 
left child of a node of $\cal T$ corresponds to the graph induced on $V_1$ such that the treewidth of $G[V_1]$ is $d$.    
Assuming that the root is at depth $0$ we show that 
the depth of $\cal T$ is bounded by $k$. Let $P=a_0a_1\cdots a_q$ be a longest path from the root to a leaf and let $G_i$ be the graph 
associated with the node $a_i$. Observe that for every $i\in \{0,\ldots,q-1\}$, $a_i$ has a left child, or else $a_i$ cannot be a non-leaf 
node of $\cal T$. Let the graph 
associated with the left child of $a_i$, $i\in \{0,\ldots,q-1\}$, be denoted by $H_i$. Observe that for every $0\leq i<j \leq q-1$,  
$V(H_i)\cap V(H_j)=\emptyset$ and $\tw(H_i)=d$. This implies that every $H_i$ has at least one $H$ minor model and all of these 
are vertex-disjoint. This implies that $q\leq k$ and hence the depth of $\cal T$ is bounded by $k$.  

Let us look at all the subproblems at depth $i$ in the recursion tree $\cal T$.  
Suppose at depth $i$ the induced subgraphs associated with these subproblems are $G[V_i]$, $i\in [\tau]$, where $\tau$ is some positive integer. 
Then observe that for every $i,j\in[\tau]$ and $i\neq j$, we have that $V_i\cap V_j=\emptyset$, there is no edge $(u,v)$ such that 
$u\in V_i$,  $v\in V_j$, and hence $\sum_{i=1}^\tau k_i\leq k$, where $k_i$ is the size of the 
minimum $\cal F$--hitting set of $G[V_i]$.  Furthermore the number of instances at depth $i$ such that it has at least one 
$H$ minor model and hence contributes to the hitting set is at most $k$. Now Lemma~\ref{claim:treewidthbound} together with the factor 
$d' \sqrt{\log \tw(G)}$ approximation algorithm of Feige et al.~\cite{Feige:2008ge} implies 
that the treewidth of every instance is upper bounded by $(k_i+d)d'\sqrt{\log (k_i+d)}$, where $k_i$ is the size of the minimum $\cal F$--hitting set of $G[V_i]$. Hence the total size of the union of sets added to our hitting set at depth $i$ is at most  
\begin{align*}
& \sum_{i=1}^\tau \chi(i) (k_i+d)d'\sqrt{\log (k_i+d)}  \leq  d'(k+d) \sqrt{\log (k+d)}. 
 \end{align*}
 Here $\chi(i)$ is $1$ if $G[V_i]$ contains at least one $H$ minor model and is $0$ otherwise. We have shown that for each 
$i$ the size of the union of the sets added to the hitting set is at most  $d'(k+d) \sqrt{\log (k+d)}$. 
This together with the fact that the depth is at most $k$  implies 
  that the size of the $\cal F$-hitting set is at most  $O(k^2 \sqrt{\log k})$. Hence if the size of the hitting set returned by 
the algorithm is more than $d'(k+d)k \sqrt{\log (k+d)}$ then we return that $G$ has at no $\cal F$-hitting set of size at most $k$. 
Hence when we move to the second 
phase we assume that we have a hitting set of size $O(k^2 \sqrt{\log k})$. This concludes the description of the first phase of the algorithm. 

\begin{algorithm}[t]
\caption{{\sc Hit-Set-II-}$(G,Z)$}
\label{fig:jchitsetconst3}
\begin{algorithmic}[1]

\IF {$\tw(G)\leq d$}
\STATE Find a minimum $\cal F$-hitting set $Y$ of $G$ and Return $Y$.
\ENDIF
\STATE Compute an approximate tree decomposition $(T,\mathcal{ X}=\{X_{t}\}_{t\in V(T)})$ of width $\ell$. 
\STATE Convert it to a nice tree decomposition of $G$. Now compute the function 
$\mu: V(T) \rightarrow \mathbb{N}$, defined as follows: $\mu(t) =|V(H_t)\cap Z|$.
\IF {$(\mu(r) = 0)$}
\STATE Return $\phi$.
\ELSE
\STATE Find the partitioning of the vertex set $V(G)$ into $V_1$, $V_2$ and $X$ (a bag corresponding to a node in $T$) 
as described in Cases $1$ and $2$ of the proof of Theorem~\ref{thm:epconstructive}. 
\ENDIF
\STATE Return $\Big(X\bigcup$ {\sc Hit-Set-II-}$(G[V_1],Z)\bigcup$ {\sc Hit-Set-II-}$(G[V_2],Z)\Big).$

\end{algorithmic}
\end{algorithm}

Now we describe the second phase of the algorithm. Here we are given the hitting set $Z$ of size $O(k^2 \sqrt{\log k})$ 
obtained from the first phase of the algorithm. The algorithm is given in Algorithm~\ref{fig:jchitsetconst3}. 
The new algorithm essentially uses $Z$ to 
define a good labeling function $\mu$ which enables us to argue that the depth of recursion is upper bounded by $O(\log |Z|)$. 
In particular, consider the function $\mu: V(T) \rightarrow \mathbb{N}$, defined as follows: $\mu(t) =|V(H_t)\cap Z|$. Let $k' := \mu(r)$, where $r$ is the node corresponding to the root of a fixed nice tree decomposition of $G$.

Let $t\in V(T)$ be the node where $\mu(t)>2k'/3$ and for each child $t'$ of $t$, $\mu(t')\leq 2k'/3$.
Since $\mu$ is a good labeling function, it is easy to see that this node exists and is unique provided that $k'>0$. 
Moreover, observe that $t$ could  either be a forget node or a join node. We
distinguish these two cases.

\begin{itemize}
\setlength{\itemsep}{-1pt}
\item{\em Case 1.} 
If $t$ is  a forget node, we set $V_{1}=V(H_{t'})$ and $V_{2}=V(G)\setminus (V_{1}\cup X_{t'})$
and observe that $P_{\theta_c}(G[V_i])\leq \lfloor 2k'/3\rfloor, i=1,2$. Also we
set $X=X_{t'}$. 

\item{\em Case 2.} If $t$ is a join node with children $t_{1}$ and $t_{2}$, we have that 
$\mu(t_{i})\leq 2k'/3, i=1,2$.  
However, as $\mu(t_{1})+\mu(t_{2})>2k'/3$, we also have that 
either $\mu(t_{1})\geq k'/3$ or $\mu(t_{2})\geq k'/3$. Without loss of generality we assume that 
$\mu(t_{1})\geq k'/3$ and we set $V_{1}=V(H_{t_{1}})$, $V_{2}=V(G)\setminus (V_{1}\cup X_{t_{1}})$
and $X=X_{t_{1}}$.
\end{itemize}
Now we argue that if $G$ has a $\cal F$--hitting set of size at most $k$ then 
then the size of the hitting set returned by the algorithm is upper bounded by $O(k \log^{3/2} k)$. As in 
the first phase we can argue that the size of the union of the sets added to the hitting set in the subproblems at depth $i$ 
is at most   $d'(k+d) \sqrt{\log (k+d)}$. Observe that the recursive procedure in Algorithm~\ref{fig:jchitsetconst3} 
is such that the value of the function $\mu()$ drops by at least a constant fraction at every level of recursion. This
implies that the depth of recursion is upper bounded by $O(\log |Z|)=O(\log k)$. Hence the size of the 
hitting set returned by the algorithm is upper bounded by $O(k \log^{3/2}k)$ whenever $G$ has a  
$\cal F$--hitting set of size at most $k$. Thus if the size of the hitting set returned by 
{\sc Hit-Set-II-}$(G,Z)$ is more than $d'(k+d) \sqrt{\log^{3/2} (k+d)}$,  we return that $G$ does not have a 
$\cal F$--hitting set of size at most $k$. This concludes the proof. 
\end{proof}

\begin{proof}[\bf Proof of Theorem~\ref{thm:fapprox}] Given a graph
  $G$ on $n$ vertices, let $k$ be the minimum positive integer in
  $\{1,\ldots, n\}$ such that Lemma~\ref{thm:epconstructive}
  returns a $\cal F$-hitting set $S$ when applied on $(G,k)$. We
  return this $S$ as an approximate solution. By our choice of $k$
  we know that $G$ does not have $\cal F$-hitting set of size at
  most $k-1$ and hence $OPT\geq k$. This implies that the size of
  $S$ returned by Lemma~\ref{thm:epconstructive} is at most $O(k
  \log^{3/2} k)=O(OPT \log^{3/2} OPT)$.  This concludes the proof.
\end{proof}

We now define a generic problem. Let $\eta$ be a fixed constant. In the {\sc Treewidth $\eta$-Deletion Set}  problem, we are 
given an input graph $G$ and the objective is to delete minimum number of vertices from a graph such that the resulting graph has treewidth at most $\eta$. For an example  {\sc Treewidth $1$-Deletion Set} is simply the {\sc Feedback vertex set} problem. We obtain  the following corollary of Theorem~\ref{thm:fapprox}.

\begin{corollary}
{\sc Feedback Vertex Set}, 
{\sc Diamond Hitting Set}, {\sc Pathwidth One Deletion Set}, {\sc Outerplanar Deletion Set} and 
{\sc Treewidth $\eta$-Deletion Set} 
admit a factor $O(\log^{3/2} n)$ approximation algorithm on general undirected graphs. 
\end{corollary}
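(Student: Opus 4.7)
The plan is to observe that each of the five listed problems can be cast as an \fd{} problem whose obstruction set $\mathcal{F}$ contains at least one planar graph, and then invoke Theorem~\ref{thm:fapprox} directly. Since the algorithm of Theorem~\ref{thm:fapprox} returns a solution of size $O(OPT \cdot \log^{3/2} OPT)$ and $OPT \leq n$, this yields an approximation ratio of $O(\log^{3/2} n)$ in each case.

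The only thing to verify is that every problem in the list is indeed of the claimed form with a planar obstruction. For \textsc{Feedback Vertex Set}, the forbidden minor set is $\{K_3\}$; for \textsc{Diamond Hitting Set}, it is $\{\theta_3\}$; for \textsc{Pathwidth One Deletion Set}, it is $\{K_3, T_2\}$; and for \textsc{Outerplanar Deletion Set}, it is $\{K_4, K_{2,3}\}$. In each of these cases every obstruction is planar, so Theorem~\ref{thm:fapprox} applies.

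The only non-immediate case is \textsc{Treewidth $\eta$-Deletion Set}. Here the target class is the class of graphs of treewidth at most $\eta$, which is closed under taking minors; by the Robertson--Seymour theorem it therefore has a finite set $\mathcal{F}_\eta$ of forbidden minors. To see that $\mathcal{F}_\eta$ contains a planar graph, I would use the fact that the $(\eta+1) \times (\eta+1)$ grid has treewidth $\eta+1 > \eta$ and so is \emph{not} in the class. Consequently at least one element of $\mathcal{F}_\eta$ must be a minor of this grid, and minors of planar graphs are planar. Thus $\mathcal{F}_\eta$ contains a planar graph, and Theorem~\ref{thm:fapprox} once again applies.

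The only mildly subtle step is the argument that at least one obstruction of the class of graphs of treewidth $\leq \eta$ is planar; everything else is a direct appeal to Theorem~\ref{thm:fapprox} combined with the trivial bound $OPT \leq n$. I do not foresee any technical obstacle beyond this verification.
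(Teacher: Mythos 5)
Your proposal is correct and follows exactly the route the paper intends: the paper states this corollary without proof as an immediate consequence of Theorem~\ref{thm:fapprox}, and your verification that each problem's obstruction set contains a planar graph (including the grid argument for \textsc{Treewidth $\eta$-Deletion Set}, where the $(\eta+1)\times(\eta+1)$ grid has treewidth $\eta+1$ and hence forces a planar obstruction) together with the bound $OPT\leq n$ is precisely the missing justification. No gaps.
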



\section{Kernelization for \tfd}
\label{sec:kernelization}
In this section we obtain a polynomial kernel for \tfd{} on general graphs. To obtain our 
kernelization algorithm we not only need approximation algorithm presented in the last 
section but also a variation of classical Hall's theorem. We first present this combinatorial tool and other 
auxiliary results that we make use of. 

\subsection{Combinatorial Lemma and some Linear-Time Subroutines.}
We need a variation of the celebrated Hall's Theorem,
which we call the \qel{}. The \qel{} is a generalization of a
result due to Thomass\'e~\cite[Theorem~2.3]{T09}, and captures a
certain property of neighborhood sets in graphs that implicitly
has been used by several authors to obtain polynomial kernels for
many graph problems. For $q=1$, the application of this lemma is
exactly the well-known Crown Reduction Rule~\cite{AFLS07}. 


\paragraph{ The Expansion Lemma.}
Consider a bipartite graph $G$ with vertex bipartition $A\uplus B$. Given 
subsets $S\subseteq A$ and $T\subseteq B$, we say that $S$ has $|S|$ $q$-stars
in $T$ 
if to every $x\in S$ we can associate a subset $F_x\subseteq N(x)\cap T$ such
that (a) for all $x\in S$, $|F_x|=q$; (b) for any pair of vertices $x,y\in S$, 
$F_x\cap F_y=\emptyset$. Observe that if $S$ has $|S|$ $q$-stars in $T$ then 
every vertex $x$ in $S$ could be thought of as the center of a star with its $q$ 
leaves in $T$, with all these stars being vertex-disjoint. Further, a collection of $|S|$ $q$-stars
is also a family of $q$ edge-disjoint matchings, each saturating $S$. 
We use the following result in our kernelization algorithm 
to bound the degrees of vertices. 

\begin{lemma}\label{lem:q-expansion-lemma} {\rm\textbf{{[}The \qel{}{]}}}
Let $q$ be a positive integer, and let $m$ be the size of 
the maximum matching in a bipartite graph $G$ with vertex bipartition $A \uplus B$. If  $|B|> mq $, and there are no isolated
vertices in $B$, then there exist nonempty vertex sets $S\subseteq A,T\subseteq
B$ 
such that  $S$ has $|S|$ $q$-stars in $T$ and no vertex in $T$ has a neighbor outside $S$. Furthermore, the sets 
$S,T$ can be found in time polynomial in the size of $G$.
\end{lemma}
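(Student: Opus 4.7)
The plan is to reduce the $q$-expansion lemma to the classical crown decomposition (the $q=1$ case) by working in an auxiliary bipartite graph $G'$ obtained from $G$ by duplicating each vertex of $A$ into $q$ identical copies. Concretely, $G'$ has bipartition $A' \uplus B$ where $A' = \{a^1,\ldots,a^q : a \in A\}$, and $\{a^i,b\}$ is an edge of $G'$ iff $\{a,b\}$ is an edge of $G$. The key observation is that finding $|S|$ pairwise vertex-disjoint $q$-stars centred on $S$ with leaves in $T$ is equivalent to finding a matching of $G'$ that saturates the copy-closed set $\{a^i : a \in S,\, i \in [q]\}$ and whose matched partners all lie in $T$.

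First I would bound the maximum matching size $m'$ in $G'$ by $qm$. Using K\"onig's theorem, take a minimum vertex cover $C_A \uplus C_B$ of $G$ of size $m$; then $\{a^i : a \in C_A,\, i \in [q]\} \cup C_B$ is a vertex cover of $G'$ of size at most $q|C_A| + |C_B| \leq qm$, so $m' \leq qm$. Combined with $|B| > qm$, any maximum matching $M'$ of $G'$ leaves some vertices of $B$ unsaturated. Let $U_B$ be those vertices, and let $Z$ denote the set of all vertices reachable from $U_B$ by $M'$-alternating paths starting with a non-matching edge. Setting $S' := Z \cap A'$ and $T := Z \cap B$ produces a classical crown: maximality of $M'$ forces every vertex of $S'$ to be matched by $M'$ to a vertex of $T$ (else an augmenting path would contradict maximality), and $N_{G'}(T) \subseteq S'$ (any edge from $T$ into $A' \setminus S'$ would extend the alternating forest and contradict the definition of $Z$).

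The crucial and nontrivial step --- the only real departure from the $q=1$ setting, and what I view as the main obstacle --- is to show that $S'$ is closed under the copy relation, i.e., $a^i \in S'$ implies $a^j \in S'$ for every $j \in [q]$. I would prove this by taking a shortest $M'$-alternating path from $U_B$ to some copy of $a$, ending at $a^i$ via a non-matching edge $(b, a^i)$. Because all copies of $a$ share the same neighborhood in $G'$, the edge $(b, a^j)$ exists for every $j$. If $a^j$ is not the $M'$-partner of $b$ then $(b,a^j)$ is non-matching, giving a valid alternating path terminating at $a^j$; in the only remaining case $M'(b)=a^j$, the vertex $a^j$ already lies on the path and is therefore in $Z$. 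This closure hinges precisely on the fact that duplicated copies are indistinguishable in $G'$.

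Once closure is established, $S' = \{a^i : a \in S, i \in [q]\}$ for $S$ the set of those $a \in A$ some copy of which lies in $S'$; restricting $M'$ to $S'$ then supplies each $a \in S$ with exactly $q$ pairwise-disjoint neighbors $F_a \subseteq T$, yielding the required $|S|$ $q$-stars. The containment $N_G(T) \subseteq S$ follows directly from $N_{G'}(T) \subseteq S'$ by projection. Nonemptiness of $T$ is immediate from $U_B \neq \emptyset$, and nonemptiness of $S$ follows because any vertex of $U_B \subseteq T$ has at least one neighbor in $A$ by the no-isolated-vertices hypothesis, and every such neighbor is forced into $S$ by $N_G(T) \subseteq S$. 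Finally, maximum bipartite matching in $G'$ and BFS on the alternating reachability set both run in polynomial time, so $S$ and $T$ are produced within the time required by the statement.
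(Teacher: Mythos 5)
Your proof is correct and follows essentially the same route as the paper's: both build the auxiliary bipartite graph with $q$ indistinguishable copies of each vertex of $A$, take a maximum matching there, extract a crown via alternating-path reachability, and use the fact that copies share neighborhoods to show the $A$-side of the crown is copy-closed, then project back. The only differences are cosmetic --- you grow the alternating forest from the unsaturated $B$-vertices whereas the paper takes the complement of the set reachable from unsaturated copies on the $A$-side (the dual form of the same K\"onig-type construction), and you explicitly verify the matching bound $m'\leq qm$ and the nonemptiness of $S$ and $T$, steps the paper leaves implicit.
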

\begin{proof}
Consider the graph $H=(X\uplus B,E)$ obtained from $G=(A\uplus B,E)$ by adding $(q-1)$ copies of all the vertices in $A$, and giving all copies of a vertex $v$ the same neighborhood in $B$ as $v$. Let $M$ be a maximum matching in $H$. In further discussions, vertices are saturated and unsaturated  with respect to this fixed matching $M$.

Let $U_X$  be the vertices in $X$ that are unsaturated, and $R_X$ be those that are reachable from $U_X$ via alternating paths. We let $S_A = X \setminus (U_X \cup R_X)$. Let $U_B$ be the set of unsaturated vertices in $B$, and let $S'$ denote the set of partners of $S_A$ in the matching $M$, that is, $S' = \{x \in B ~|~ \{u,x\} \in M \mbox{ and } u \in S_A\}$. Let $T = S' \cup U_B$ (see Figure~\ref{fig:expansion-lemma-proof}).

\begin{figure}[H]
\begin{center}
 \includegraphics[scale=0.6]{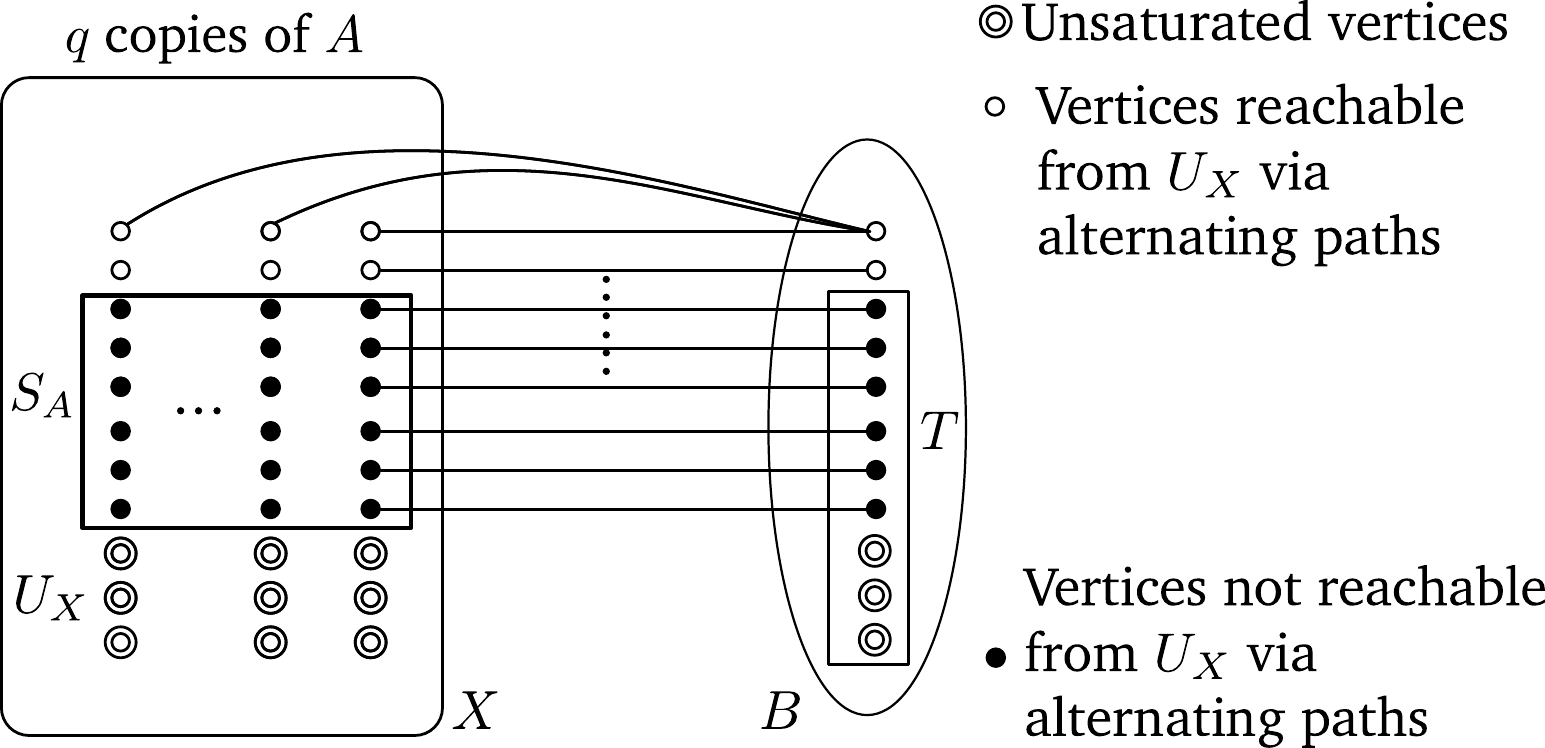}
 \caption{\label{fig:expansion-lemma-proof} The construction used in the proof
 of the \qel{}}
\end{center}
\end{figure}

For every $v \in A$, let $C(v)$ be the set of all copies of $v$ (including $v$). We claim that either $C(v) \cap S_A = C(v)$, or $C(v) \cap S_A = \emptyset$. Suppose that $v \in S_A$ but a copy of $v$, say $u$, is in $U_X$. Let $\{v,w\} \in M$. Then $v$ is reachable from $u$ because $\{u,w\} \in E(H)$, and hence $w$ is not unsaturated in $M$, contradicting the assumption that $w \in U_X$. In the case when $v \in S_A$ but a copy of $u$ is in $R_X$, let $\{w,u\}$ be the last edge on some alternating path from $U_X$ to $u$. Since $\{w,v\} \in E(H)$, we have  that there is also an alternating path from $U_X$ to $v$, contradicting the fact that $v \in S_A$. Let $S=\{v\in A \vert C(v)\subseteq S_A\}$.  Then the subgraph $G[S \cup T]$ contains $q$ edge-disjoint matchings, each of which saturates $S$ in G --- this is because in $H$, $M$ saturates each copy of $v\in S$ separately.

If no vertex in $T$ has a neighbor outside $S_A$ in $H$, then from the construction no vertex in $T$ has a neighbor outside $S$ in $G$. We now prove that no vertex in $T$ has a neighbor outside $S_A$ in $H$. For the purpose of contradiction, let us assume that for some $v \in T$,  $u \in N(v)$, but $u \notin S_A$. First, consider the case when $v \in S'$. Suppose $u \in R_X$. We know that $u \in R_X$ because there is some unsaturated vertex (say $w$) that is connected by an alternating path to $u$. This path can be extended to a path to $v$ using the edge $\{u,v\}$, and can be further extended to $v'$, where $\{v,v'\} \in M$. However, $v' \in S_A$, and by construction, there is no path from $w \in U_X$ to $v'$, a contradiction. If $u \in U_X$, then we arrive at a contradiction along the same lines (in fact, the paths from $w$ to a vertex in $S$ will be of length two in this case). Now consider the case when $v \in U_B$. Again, we may arrive at $u$ from some $w \in U_X$ (if $u \in R_X$) or $\{u,w\}$ is an independent edge outside $M$ (if $u \in U_X$). In both cases, we have an {\em augmenting} path, contradicting the fact that $M$ is a maximum matching. This completes the proof.
\end{proof}

We will need the following proposition  for the proof of next observation. Its 
proof follows from definitions. 
\begin{proposition}\label{cor:min-jc-minor-model}
  For any $c\in\mathbb{N}$, a subgraph $M$ of   graph $G$ is a minimal minor-model of \jc{} in $G$
  if and only if $M$ consists of two trees, say $T_1$ and $T_2$, and a
  set $S$ of $c$ edges, each of which has one end vertex in $T_1$ and
  the other in $T_2$.
\end{proposition}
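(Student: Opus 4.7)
The proposition is an ``if and only if'' structural characterization of minimal minor-models of $\theta_c$, and I will prove the two directions separately.

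For the forward direction, let $M$ be a minimal minor-model of $\theta_c$ in $G$, and let $B_1, B_2 \subseteq V(M)$ be vertex-disjoint branch sets corresponding to the two vertices of $\theta_c$, so that each $M[B_i]$ is connected and $M$ contains at least $c$ edges between $B_1$ and $B_2$. The plan is a sequence of three reductions, each invoking minimality of $M$. First, I claim $V(M) = B_1 \cup B_2$: any $v \notin B_1 \cup B_2$ lets $M - v$ retain the same branch sets and cross-edges, yielding a strictly smaller minor-model. Second, each $M[B_i]$ is a tree $T_i$: any edge of a cycle in $M[B_i]$ could be deleted while preserving connectivity of $B_i$, contradicting minimality. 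Third, the set $S$ of cross-edges has exactly $c$ elements: fewer would fail to realize $\theta_c$, while more would allow deletion of a superfluous cross-edge. Together these give $M = T_1 \cup T_2 \cup S$ with $|S| = c$ and every edge of $S$ having one endpoint in $T_1$ and the other in $T_2$.

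For the reverse direction, given $M = T_1 \cup T_2 \cup S$ with the stated structure, contracting the edges of each $T_i$ yields exactly two vertices joined by $c$ parallel edges, so $M$ is a minor-model of $\theta_c$. For minimality, I would argue that no proper subgraph $M' \subsetneq M$ contains $\theta_c$ as a minor: removing a cross-edge leaves only $c-1$ edges between $V(T_1)$ and $V(T_2)$, and since these are the only inter-tree edges, no pair of vertex-disjoint connected subgraphs of $M'$ can realize $c$ parallel connections; removing an internal edge of $T_i$ splits $T_i$ into two subtrees $T_i'$ and $T_i''$, and the (implicit) fact that every leaf of $T_i$ is incident to some $S$-edge forces the $c$ cross-edges to distribute strictly between $T_i'$ and $T_i''$, so no branch set contained in one side can maintain $c$ cross-edges to $T_{3-i}$. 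Vertex deletions reduce to the edge cases via their incident edges.

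The step I expect to be the main obstacle is the internal-edge case of the reverse direction: the argument relies on the structural property that every leaf of $T_i$ is an $S$-endpoint (a Steiner-tree minimality condition). This property follows from the forward direction---a leaf of $T_i$ not incident to any $S$-edge could be trimmed off while retaining a minor-model---so the cleanest strategy is to establish the forward direction first, extract the leaf-property as a side observation, and then deploy it to close the reverse direction.
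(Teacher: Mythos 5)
The paper supplies no proof at all for this proposition---it declares that it ``follows from definitions''---and in fact it only ever uses the forward direction (in the proof of Observation~\ref{obs:no-cut-pendant-vertex}, which re-derives your leaf observation directly from minimality). Your forward direction is correct and is the natural argument: strip vertices outside the branch sets, strip cycle edges inside each branch set, strip surplus cross edges, each step contradicting minimality. So for the half of the statement the paper actually relies on, your proposal is fine.

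The reverse direction is where there are genuine gaps, and they are twofold. First, as you half-notice, the ``if'' direction is false as printed: attach to $T_1$ a pendant vertex incident to no edge of $S$ and the resulting $M$ is a model but not a minimal one. The missing hypothesis---every leaf of each $T_i$ is an endpoint of some edge of $S$---cannot be ``extracted from the forward direction and then deployed,'' as you propose: in the reverse direction minimality is the \emph{conclusion}, not a hypothesis, so the leaf condition must be added to the statement as an assumption rather than derived. Second, even granting that condition, your argument that $M-e$ contains no $\theta_c$-model is invalid, because it assumes any branch sets of a sub-model respect the $T_1$/$T_2$ bipartition (or lie on one side of a tree split), so that only edges of $S$ can serve as the $c$ parallel connections. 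That is false: take $c=3$, let $T_1$ be the path with edges $ab,bd$, let $T_2$ be the path with edges $uw,wv$, and let $S=\{au,bw,dv\}$; then the branch sets $\{b,d,v\}$ and $\{a,u,w\}$ realize $\theta_3$ using the \emph{tree} edges $ab$ and $vw$ as two of the three connections. A correct proof counts edges instead: $|E(M)|=|V(M)|-2+c$; by the already-proven forward direction any minimal sub-model $M''$ has exactly $|V(M'')|-2+c$ edges; and $M$ induces at most $|V'|-2+c$ edges on any $V'\subseteq V(M)$ meeting both trees, with equality forcing all $c$ edges of $S$ into $M''$, hence (by the leaf condition) all leaves of both trees into $V'$, hence $V'=V(M)$ and $M''=M$. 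Since every proper subgraph of $M$ lies in some $M-e$, minimality follows.
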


\begin{observation}
\label{obs:no-cut-pendant-vertex} For $c\ge2$, any minimal \jc{} minor-model $M$ of a graph $G$
is a connected subgraph of $G$, and does not contain a vertex whose degree in $M$ is less than $2$, or a vertex whose deletion from $M$ results in a disconnected graph
(a cut vertex of $M$).
\end{observation}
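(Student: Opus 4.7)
The plan is to exploit the structural characterization in Proposition~\ref{cor:min-jc-minor-model}: any minimal \jc{} minor-model $M$ decomposes as $T_1 \cup T_2 \cup S$, where $T_1$ and $T_2$ are trees and $S$ is a set of $c$ edges, each with one endpoint in $T_1$ and the other in $T_2$. Each of the three assertions (connectivity, minimum degree at least $2$, absence of cut vertices) will be established by showing that if it fails then one can exhibit a proper subgraph of $M$ that is still a \jc{} minor-model, contradicting the minimality of $M$.

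Connectivity is immediate: $T_1$ and $T_2$ are trees, hence connected, and since $|S| = c \geq 2$ there is at least one edge of $S$ joining them, so $M$ is connected. For the minimum-degree claim, fix $v \in V(M)$ and, without loss of generality, assume $v \in V(T_1)$. If $|V(T_1)| = 1$ then $v$ is incident to all $c \geq 2$ edges of $S$, so $\deg_M(v) \geq 2$. Otherwise, if $v$ is an internal vertex of $T_1$ then already $\deg_{T_1}(v) \geq 2$; and if $v$ is a leaf of $T_1$ incident to no edge of $S$, then deleting $v$ together with its unique tree-edge leaves $T_1 - v$ still a tree, which together with $T_2$ and the unchanged set $S$ forms a proper \jc{} minor-model inside $M$, contradicting minimality. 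Hence every leaf of $T_1$ must carry at least one edge of $S$, giving $\deg_M(v) \geq 2$ in this case as well.

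For the cut-vertex property, suppose toward a contradiction that $v \in V(T_1)$ is a cut vertex of $M$. If $|V(T_1)| = 1$ then $M - v = T_2$ is connected, so this case does not occur. If $v$ is a leaf of $T_1$ with $|V(T_1)| \geq 2$, then $T_1 - v$ is still a connected tree, so the only way $M - v$ can be disconnected is if every edge of $S$ is incident to $v$; but then $\{v\} \cup T_2 \cup S$ is a strictly smaller \jc{} minor-model inside $M$, contradicting minimality. Finally, if $v$ is an internal vertex of $T_1$, then $T_1 - v$ splits into at least two subtrees $C_1, \ldots, C_k$, and for $v$ to be a cut vertex at least one $C_i$ must carry no $S$-edge to $T_2$; deleting the vertices of that $C_i$ from $T_1$ (keeping $v$) leaves a smaller tree $T_1'$ still joined to $T_2$ by the full set $S$, a proper \jc{} minor-model inside $M$ --- again a contradiction. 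There is no truly hard step; the only subtlety is to treat the boundary case $|V(T_1)| = 1$ separately in each argument, and to verify that removing a hanging subtree $C_i$ from the tree $T_1$ indeed leaves a tree. Beyond that, the proof is a direct application of Proposition~\ref{cor:min-jc-minor-model}.
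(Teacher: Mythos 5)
Your proof is correct and takes essentially the same route as the paper: both rest on the decomposition of Proposition~\ref{cor:min-jc-minor-model} into two trees $T_1,T_2$ joined by $c$ edges, and both settle the degree claim by pruning a leaf that carries no cross edge. The only (cosmetic) difference is in the cut-vertex step, where the paper uses the already-established degree bound to argue directly that every component of $T_1-v$ keeps an edge to $T_2$ so that $M-v$ remains connected, whereas you re-invoke minimality to delete a component with no cross edge; both arguments are sound.
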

\begin{proof}
From Proposition~\ref{cor:min-jc-minor-model}, whose terminology we use in
this proof, $M$ is connected
and contains no isolated vertex. Suppose $x$ is a vertex of degree
exactly one in $M$. Then $x$ is a leaf node in one of the two trees in
$M$, say $T_{1}$, and no edge in $S$ is incident on $x$. Removing $x$
from $T_{1}$ results in a smaller \jc{} minor-model, contradicting
the minimality of $M$. It follows that every vertex of $M$ has degree
at least two.

Now suppose $x$ is a cut vertex in $M$ which belongs to, say, the
tree $T_{1}$. Let $T_{1}^{1},T_{1}^{2},\ldots,T_{1}^{l}$ be the
subtrees of $T_{1}$ obtained when $x$ is deleted from $T_{1}$.
Let $M'$ be the graph obtained by deleting $x$ from $M$. If $l>0$,
then each $T_{1}^{i}$ has a leaf node, which, by the above argument,
has at least one neighbor in $T_{2}$. If $l=0$, then $M'=T_{2}$.
Thus $M'$ is connected in all cases, and so $x$ is not a cut vertex,
a contradiction.%
\end{proof}


The following well known result states that every
optimization problem expressible in MSO has a linear time algorithm 
on  graphs of bounded treewidth. 
\begin{proposition}[\cite{ArnborgLagergrenSeese91,Bodlaender96ali,BorieParkerTovey1992,Courcelle1990,CourcelleMosbah1993}]%
\label{prop:lineartimetwalgo}%
Let $\phi$ be a property that is expressible in Monadic Second Order
Logic. For any fixed positive integer $t$, there is an algorithm 
that, given a graph $G$ of treewidth at most $t$ as input,
finds a largest (alternatively, smallest) set $S$ of vertices of
$G$ that satisfies $\phi$ in time $f(t,|\phi|)|V(G)|$. 
\end{proposition}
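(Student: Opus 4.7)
The plan is to prove the proposition by combining two classical ingredients: Bodlaender's linear-time algorithm for computing a tree decomposition of bounded width, and a bottom-up dynamic program over that decomposition driven by a finite-state characterization of the MSO formula $\phi$. First I would invoke Bodlaender's algorithm to compute, in time $O(|V(G)|)$, a tree decomposition of $G$ of width at most $t$, and then convert it in linear time into a nice tree decomposition $(T,\{X_u\}_{u \in V(T)})$ of the same width and with $O(|V(G)|)$ nodes of the four standard kinds (base, introduce, forget, join).

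The heart of the argument is to show that one can maintain, at each node $u$ of $T$, a table of size bounded by some $g(t,|\phi|)$ whose entries record, for every equivalence class of partial configurations on $G_u$ restricted to the boundary $X_u$, the smallest (or largest) size of a set $S \cap V(G_u)$ that realizes that class. The relevant equivalence is the Myhill--Nerode style relation on $|X_u|$-boundaried graphs induced by $\phi$: two configurations are equivalent if no gluing with a common complementary boundaried graph can distinguish them with respect to satisfaction of $\phi$. The crucial structural fact is that this equivalence has finitely many classes, with the number of classes bounded by a computable function of $t$ and $|\phi|$ only; this is precisely the content of the cited Arnborg--Lagergren--Seese / Courcelle / Borie--Parker--Tovey results.

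Granted this finite-state characterization, the dynamic program proceeds bottom-up on $T$: at each base node the table is initialized trivially, at an introduce or forget node the new table is computed from that of the unique child by a local transition rule of the induced tree automaton, and at a join node the tables of the two children are combined by the automaton's product transition. Each update depends only on $g(t,|\phi|)$ and on the bag size $t+1$, so it takes time depending only on $t$ and $|\phi|$, giving an overall running time of $f(t,|\phi|) \cdot |V(G)|$. The optimum value is read off at the root, and a witness set $S$ is recovered by a standard traceback through the tables, augmenting each table entry with a pointer to the partial solution achieving it.

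The main obstacle is establishing the finite-state property of MSO on bounded-width tree decompositions, that is, translating $\phi$ into a tree automaton running on a parse-tree encoding of the bounded-width graph. One would do this by induction on the structure of $\phi$: atomic predicates and equalities give trivially finite automata on bag-indexed alphabets; boolean connectives are handled by standard product and complementation constructions; first-order and set quantifiers are handled by projection/powerset constructions (which are where the explosion of $g(t,|\phi|)$ comes from). All of this is independent of $|V(G)|$, so once the automaton is built once, the dynamic program runs in time linear in $|V(G)|$ as required.
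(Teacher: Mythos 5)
Your proposal is correct and is essentially the standard argument underlying the cited references: Bodlaender's linear-time tree decomposition, conversion to a nice decomposition, and a bottom-up dynamic program over the finite-index Myhill--Nerode/tree-automaton classes induced by $\phi$, augmented with optimum sizes and traceback pointers. The paper gives no proof of its own for this proposition (it is imported by citation), and your sketch matches the approach of Arnborg--Lagergren--Seese, Courcelle--Mosbah, and Borie--Parker--Tovey.
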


Proposition~\ref{prop:lineartimetwalgo} together with MSO  formulations~\ref{cmso:jchitset} and~\ref{cmso:jcflower} given in Section~\ref{sec:mso-thetac} implies the following  lemma.

\begin{lemma}
\label{lem:linALgTW}%
Let $G$ be a graph on $n$ vertices and $v$ a vertex of $G$.  Given a 
tree decomposition of width $t \in O(1)$ of $G$, we can, in $O(n)$ time, find both 
(1) a smallest set $S\subseteq V$ of vertices of
$G$ such that the graph $G \setminus S$ does not contain \jc{}
as a minor, and (2) a largest collection 
$\{M_{1},M_{2},\ldots,M_{l}\}$ of \jc{} minor
models of $G$ such that for $1\le i<j\le l,(V(M_{i})\cap V(M_{j}))=\{v\}$.
%
\end{lemma}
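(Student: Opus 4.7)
The plan is to invoke Proposition~\ref{prop:lineartimetwalgo} on the MSO formulations already exhibited in Section~\ref{sec:mso-thetac}, specialized to $\mathcal{F}=\{\theta_c\}$. Since $c$ is a fixed constant, the subformula $\phi_{\theta_c}$ from~\ref{cmso:jcminor} has constant size, so all the MSO expressions we need to evaluate have constant size.

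For part (1), I would apply Proposition~\ref{prop:lineartimetwalgo} directly to the minimization formula~\ref{cmso:jchitset}. The predicate $\neg \phi_{\theta_c}(G\setminus S)$ is MSO-expressible and of constant size, and $\tw(G)\le t$ is given, so Proposition~\ref{prop:lineartimetwalgo} produces a smallest witnessing $S\subseteq V(G)$ in time $f(t,c)\cdot n = O(n)$. This is precisely a minimum $\theta_c$-hitting set.

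For part (2), I would invoke Proposition~\ref{prop:lineartimetwalgo} on the maximization formula~\ref{cmso:jcflower}, treating the designated vertex $v$ as a fixed free variable (equivalently, hard-coded into the formula). This yields, in linear time, a system $S=\{x_1,\ldots,x_l\}$ of distinct representatives of maximum cardinality, certifying that $\theta_c$ admits $l$ minor-models through $v$ that pairwise share only $v$. The hard part is that the lemma asks for the collection $\{M_1,\ldots,M_l\}$ of minor-models themselves, not just the count $l=|S|$. To obtain them I would use the standard fact that the underlying tree-automaton / dynamic-programming implementation of Proposition~\ref{prop:lineartimetwalgo} can be strengthened so as to also return a witness for each existentially quantified set variable; in particular, it returns a certifying edge set $F\subseteq E(G)$ from the innermost $\exists F$ in~\ref{cmso:jcflower}.

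Given $S$ and $F$, the minor-models are recovered in linear time by a single graph search on $G':=(V(G),F)$: for each $x_i\in S$, let $X_i$ be the connected component of $G'[V(G)\setminus\{v\}]$ containing $x_i$, and set $M_i$ to be the subgraph of $G$ on vertex set $X_i\cup\{v\}$ and edge set $F\cap E(G[X_i\cup\{v\}])$. The conjunct $\phi_c(X_i\cup\{v\})$ inside formula~\ref{cmso:jcflower} guarantees that each $M_i$ contains $\theta_c$ as a minor, while the clause $\forall y\in S\,[y\ne x\implies y\notin X]$ together with the maximality of the connected components enforces $V(M_i)\cap V(M_j)=\{v\}$ for $i\ne j$. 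Adding up the cost of the MSO algorithm and the final linear-time component decomposition yields the claimed $O(n)$ running time for both (1) and (2).
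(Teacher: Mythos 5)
Your proposal follows exactly the paper's route: the paper proves this lemma in one line by combining Proposition~\ref{prop:lineartimetwalgo} with the MSO formulations~\ref{cmso:jchitset} and~\ref{cmso:jcflower}. Your additional care in extracting the actual minor-models $M_i$ from the witnessing sets (rather than just the optimum value) addresses a detail the paper leaves implicit, but the underlying argument is the same.
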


Now we describe the reduction rules used by the kernelization
algorithm. In contrast to the reduction rules employed by most
known kernelization algorithms, these rules cannot always be
applied on general graphs in polynomial time. Hence the algorithm
does { not} proceed by applying these rules exhaustively, as is
typical in kernelization programs. We describe how to arrive at
situations where these rules can in fact be applied in polynomial
time, and prove that even this selective application of rules
results in a kernel of size polynomial in the parameter $k$.

\subsection{Bounding the Maximum Degree of a Graph}
Now we present a set of reduction rules which, given an input
instance $(G,k)$ of \tfd{}, obtains an equivalent instance 
$(G',k')$ where $k'\leq k$ and the maximum degree of $G'$ is at
most a polynomial in $k$. In the sequel a vertex $v$ is {\em irrelevant} if it
is not a part of any \jc{} minor model, and is  {\em relevant} otherwise. For
each rule below, the input instance is $(G,k)$.

\begin{redrule}[Irrelevant Vertex Rule]
Delete all irrelevant vertices in $G$. 
\end{redrule}

Given a graph $G$ and a vertex $v\in V(G)$, an {\em $\ell$-flower passing
through $v$} is a set of $\ell$ different \jc{} minor-models in $G$, each
containing $v$ and no two sharing any vertex other than $v$. 

\begin{redrule}[Flower Rule]
If a $(k+1)$-flower passes through a vertex $v$ of $G$, then include $v$ in the 
solution and remove it from $G$ to obtain the equivalent instance $(G\setminus
\{v\},(k-1))$. 
\end{redrule}

The argument for the soundness of these reduction rules is simple and is hence omitted. 
One can test whether a particular vertex $v$ 
is part of any minimal minor-model corresponding to \jc{} using the rooted minor testing algorithm of 
Robertson and Seymour~\cite{RobertsonS13}. It is not clear, however,
that one might check whether a vertex is a part of $(k+1)$-\jc{} flower in polynomial time.
Hence we defer the application of these rules and apply them only when the vertices are ``evidently'' 
irrelevant or finding a flower can be solved in polynomial time. Now we state an auxiliary lemma which will be useful in bounding the maximum degree of the graph. 
\begin{lemma}
\label{lem:specializedhitset}%
Let $G$ be a $n$-vertex graph containing \jc{} as a minor and $v$ be a vertex such that $G'=G\setminus \{v\}$ does not contain \jc{} as a minor and the maximum size of a 
flower containing $v$ is at most $k$. Then there exists a set $T_v$ of size $O(k)$ such that $v\notin T_v$ and $G\setminus T_v$ does not contain \jc{} as a minor. Moreover 
we can find the set $T_v$ in polynomial time.
\end{lemma}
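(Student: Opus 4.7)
My plan is to combine the bounded-treewidth structure of $G$ with a Courcelle-type MSO computation, then establish $|T_v|=O(k)$ via a rooted Erd\H{o}s--P\'osa-style argument.

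Since $\theta_c$ is a planar graph on two vertices and $\{v\}$ is a $\{\theta_c\}$-hitting set of $G$, Lemma~\ref{claim:treewidthbound} (applied with $h=2$ and the one-vertex hitting set $\{v\}$) yields $\tw(G\setminus v)\le d$ and $\tw(G)\le d+1$ for a constant $d=d(c)$. Such a tree decomposition is constructible in linear time by Bodlaender's algorithm. Conjoining the MSO formula~(\ref{cmso:jchitset}) from Section~\ref{sec:mso-thetac} with the atomic predicate $v\notin S$ gives an MSO description of ``$S$ is a minimum-size $\theta_c$-hitting set of $G$ that avoids $v$'', and feeding this formula along with the tree decomposition into Proposition~\ref{prop:lineartimetwalgo} produces such a set $T_v$ in linear (hence polynomial) time; existence is trivial because $V(G)\setminus\{v\}$ is a witness. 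In parallel, Lemma~\ref{lem:linALgTW} delivers a maximum $v$-flower $\{M_1,\ldots,M_\ell\}$ in linear time, with $\ell\le k$ by hypothesis.

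It remains to argue $|T_v|=O(k)$. The easy direction is that every $\theta_c$-model in $G$ must share a non-$v$ vertex with some petal $M_i$ (otherwise the flower is not maximal), so $\bigcup_i (V(M_i)\setminus\{v\})$ is itself a $v$-avoiding hitting set, but its size could be $\Omega(n)$. To squeeze the bound down to $O(\ell)$, I would first replace each $M_i$ with a minimal minor-model, which by Proposition~\ref{cor:min-jc-minor-model} decomposes into two trees $T_1^i\ni v$ and $T_2^i$ joined by $c$ edges; Observation~\ref{obs:no-cut-pendant-vertex} then forces every vertex of the minimal model to have degree at least two and to be non-separating. Combining these rigidity properties with the constant treewidth of $G$, one argues that only a bounded number of ``topologically distinct'' ways exist for any other $\theta_c$-model to interact with the vicinity of $M_i$, so a constant-size set of bottleneck vertices in each $M_i$ suffices to hit every $\theta_c$-model touching that petal. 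These bottlenecks can be extracted through a primal--dual dynamic program on the nice tree decomposition, yielding a hitting set of total size $O(\ell)=O(k)$.

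The main obstacle, as I see it, lies precisely in this last structural step: proving that a constant number of bottleneck vertices per petal really does hit every $\theta_c$-model that shares a vertex with that petal, i.e., that no model can weave through a single petal in an unbounded number of essentially distinct ways. Bounded treewidth makes this plausible --- only $O(1)$ bag-indexed interaction patterns are possible along the tree decomposition --- but converting this into a clean combinatorial bound requires careful use of Proposition~\ref{cor:min-jc-minor-model} and Observation~\ref{obs:no-cut-pendant-vertex} together with the maximality of the flower. Once this structural lemma is in place, the final primal--dual accounting that combines the $O(1)$-per-petal bottlenecks with the bound $\ell\le k$ to deduce $|T_v|=O(k)$ is routine.
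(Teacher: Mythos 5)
Your algorithmic half is fine: once a constant bound on $\tw(G)$ is in hand, an MSO formula for ``minimum $\theta_c$-hitting set avoiding $v$'' together with Proposition~\ref{prop:lineartimetwalgo} finds a minimum such set in polynomial time, and that set is no larger than whatever $T_v$ we can show exists. (One small caveat: you cannot get the treewidth bound by plugging $h=2$ into Lemma~\ref{claim:treewidthbound}, since $\theta_c$ has two vertices but $c$ parallel edges and is not a minor of the $4\times 4$ grid for large $c$; the constant must depend on $c$, as in the paper's bound $\tw(G\setminus v)\le 20^{2c^5}$ obtained from the fact that the $c\times c$ grid contains $\theta_c$.)

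The genuine gap is the size bound $|T_v|=O(k)$, which is the entire content of the lemma, and which you reduce to an unproven structural claim: that $O(1)$ ``bottleneck'' vertices per petal of a maximum flower hit every $\theta_c$-model meeting that petal. You correctly flag this as the obstacle, but you give no argument for it, and it is not a routine consequence of Proposition~\ref{cor:min-jc-minor-model}, Observation~\ref{obs:no-cut-pendant-vertex} and bounded treewidth --- a single petal can be a large subgraph, and many pairwise ``essentially distinct'' models can thread through it, so identifying a constant-size transversal per petal is itself a rooted \ep{}-type theorem that would need a real proof. The paper avoids any per-petal analysis. It takes a nice tree decomposition of $G'=G\setminus v$ of constant width and defines $\mu(t):=F_{\theta_c}(G[V(H_t)\cup\{v\}])$, the maximum size of a flower through $v$ inside the part below $t$ together with $v$; this is a good labeling function with $\mu(r)\le k$, computable by Lemma~\ref{lem:linALgTW}. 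Running the balanced splitting of Algorithm~\ref{fig:jchitsetconst3} with this $\mu$, each recursive step adds one bag --- of \emph{constant} size, since the width is $O(1)$ --- to the hitting set and splits the flower number as $\alpha k$ and $(1-\alpha)k$ with $1/3\le\alpha\le 2/3$. The recurrence ${\mathcal S}(k)\le\max_{1/3\le\alpha\le 2/3}\bigl\{{\mathcal S}(\alpha k)+{\mathcal S}((1-\alpha)k)+O(1)\bigr\}$ solves to $O(k)$ by Akra--Bazzi, giving both existence and a polynomial-time construction of $T_v$. You would need either to adopt this divide-and-conquer argument or to actually prove your per-petal transversal claim; as written, the proposal does not establish the lemma.
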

\begin{proof}
We first bound the treewidth of $G'$. Robertson, Seymour and
Thomas~\cite{RobertsonST94} have shown that any graph with
treewidth greater than $20^{2c^5}$ contains a $c\times c$ grid, and
hence \jc{}, as a minor. This implies that for a fixed $c$, $\tw(G') \le 20^{2c^5} = O(1)$. Now we 
show the existence of a $T_v$ of the desired kind. Recall the algorithm used to show the existence of a \jc{} hitting set for a graph described in 
Algorithm~\ref{fig:jchitsetconst3}. We use the same algorithm to construct the desired $T_v$. Let $F_{\theta_c}(G)$ denote the size of the maximum flower passing through 
$v$ in $G$. Consider a nice tree decomposition $(T,\mathcal{ X}=\{X_{t}\}_{t\in V(T)})$ of $G'$ of width at most $\tw(G')$. 
We define the function  $\mu(t) := F_{\theta_c}(G[V(H_t)\cup \{v\}])$. It is easy to see that $\mu$ is a good labeling function, and can be computed in polynomial time due to Lemma~\ref{lem:linALgTW}. 
Observe that $\mu(r) \leq k$, where $r$ is the root node of the tree decomposition. Let ${\cal S}(G',k)$ denote the size of the hitting set returned by the algorithm.  
Thus the size of the hitting set returned by the algorithm {\sc Hit-Set-II} (Algorithm~\ref{fig:jchitsetconst3}) is governed by the following recurrence: 
$$ {\mathcal S}(G',k) \leq \max_{1/3\leq \alpha \leq 2/3}\Big\{ {\mathcal
S}(G[V_1],\alpha k)+{\mathcal S}(G[V_2],(1-\alpha)k)+O(1)\Big\}. $$
Using Akra-Bazzi~\cite{AkraB98} it follows that the above recurrence
solves to $ O(k)$. This implies that there exists a  set
$T_v$ of size $O(k)$ such that $v\notin T_v$ and $G\setminus T_v$ does not contain \jc{} as a minor.
We now proceed to find an optimal hitting set in $G$ avoiding $v$. To make the algorithm {\sc Hit-Set-II} run in polynomial time we only need to find the tree decomposition and compute the function $\mu()$ in polynomial time. Since  $\tw(G) = O(1)$, we can find the desired tree decomposition of $G$ or one of its subgraphs in linear time using the algorithm of Bodlaender~\cite{Bodlaender96ali}. Similarly we can compute a flower of the maximum size using Lemma~\ref{lem:linALgTW} in linear time. Hence the function $\mu()$ can also be computed in polynomial time. This concludes the proof of the lemma.
\end{proof}

\paragraph{Flowers, Expansion and the Maximum Degree.} 
Now we are ready to prove the lemma which bounds the maximum degree of the instance. 
\begin{lemma}%
\label{lem:maxdegreebound}%
There exists a polynomial time algorithm that, given an instance $(G,k)$ of $p$-{\sc \jc{}-Deletion} returns an equivalent instance 
$(G',k')$ such that $k'\leq k$ and that the maximum degree of $G'$ is $O(k \log^{3/2} k)$.  
Moreover it also returns a \jc{}-hitting set of $G'$ of size  $O(k \log^{3/2}k)$.
\end{lemma}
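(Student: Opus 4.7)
The strategy is to obtain a reference hitting set via Theorem~\ref{thm:fapprox} and then reduce the degree of each high-degree vertex via the Flower Rule, with the \qel{} (Lemma~\ref{lem:q-expansion-lemma}) as the combinatorial engine that converts excess degree into a $(k+1)$-flower.

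I first run the approximation algorithm of Theorem~\ref{thm:fapprox} on $(G,k)$. If it returns a set of size greater than $c_1 k\log^{3/2}k$ for a sufficiently large constant $c_1$, then $(G,k)$ is trivially a no-instance and I reject; otherwise I retain a hitting set $Z$ with $|Z|=O(k\log^{3/2}k)$. By Lemma~\ref{claim:treewidthbound}, $\tw(G\setminus Z)\le d$ for a constant $d$ depending only on $c$. The degree threshold to enforce is $\Delta=c_2 k\log^{3/2}k$ for a suitable constant $c_2>c_1$.

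Next I iterate over vertices of degree exceeding $\Delta$. For such a $v$, split $N(v)=(N(v)\cap Z)\uplus N_F$ with $N_F=N(v)\setminus Z$. Since $|N(v)\cap Z|\le|Z|=O(k\log^{3/2}k)$, only $|N_F|$ can violate $\Delta$. By Observation~\ref{obs:no-cut-pendant-vertex}, any minimal $\theta_c$-minor-model through $v$ must leave $v$ through $c$ distinct neighbors. I form an auxiliary bipartite graph $B$ whose one side is $N_F$ and whose other side is the collection $\mathcal{C}$ of ``$v$-return structures'' in $G\setminus Z$, i.e., minimal connected subgraphs $C\subseteq G\setminus(Z\cup\{v\})$ that together with $v$ complete a $\theta_c$-minor-model. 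An edge of $B$ joins $u\in N_F$ to $C\in\mathcal{C}$ iff $u\in V(C)$. Because $\tw(G\setminus Z)\le d$, Lemma~\ref{lem:linALgTW} enumerates the relevant pieces $C$ and computes the maximum matching $m$ of $B$ in polynomial time. Now apply Lemma~\ref{lem:q-expansion-lemma} with $q=c$. If $m\ge k+1$, the matching itself already exhibits a $(k+1)$-flower through $v$ and the Flower Rule fires. Otherwise $m\le k$, and since $|N_F|>\Delta/2>cm$ once $c_2$ is large enough, the lemma yields nonempty $S\subseteq\mathcal{C},T\subseteq N_F$ in which every $C\in S$ has $c$ private neighbors in $T$ and no vertex of $T$ has a neighbor in $\mathcal{C}\setminus S$. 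Each such private $c$-star together with its associated $C$ produces a $\theta_c$-minor-model through $v$ (the $c$ parallels routed via the private neighbors, merged inside $C$); the ``no neighbor outside $S$'' property makes the $|S|$ minor-models pairwise vertex-disjoint off $v$. By choosing $c_2$ appropriately, $|S|\ge k+1$, so the Flower Rule applies.

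Each Flower Rule application decrements $k$ and removes a vertex, so the loop terminates in at most $k$ iterations, producing $(G',k')$ with $k'\le k$ and $\Delta(G')\le c_2 k\log^{3/2}k$; the set $Z$ restricted to $V(G')$ together with the vertices added to the solution by Flower Rule is a hitting set of $G'$ of size $O(k\log^{3/2}k)$. The main obstacle I anticipate is the precise construction of $B$: one must certify that $v$-return structures $C$ can be enumerated compactly (rather than one per subset of $G\setminus Z$) and that the $c$-stars output by the \qel{} correspond to genuinely vertex-disjoint $\theta_c$-minor-models through $v$. The former uses the bounded-treewidth dynamic programming afforded by Lemma~\ref{lem:linALgTW} (representing each $C$ by a canonical witness over a single bag), and the latter uses Observation~\ref{obs:no-cut-pendant-vertex} to route the $c$ parallels through $c$ distinct neighbors, combined with the disjointness property supplied by Lemma~\ref{lem:q-expansion-lemma}.
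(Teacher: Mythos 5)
The central gap is that your degree reduction relies entirely on the Flower Rule: you propose to delete any high-degree vertex $v$ after exhibiting a $(k+1)$-flower through it. But a vertex can have arbitrarily large degree without admitting any large flower --- all the \jc{} minor-models through $v$ may be forced to pass through a handful of vertices of $Z$, so no $k+1$ of them are vertex-disjoint away from $v$. In that situation your loop has nothing left to do, and the conclusion that it terminates with $\Delta(G')=O(k\log^{3/2}k)$ does not follow from "at most $k$ iterations". This is exactly why the paper, after exhausting a \emph{selective} Flower Rule (flowers computed in the bounded-treewidth graph $G_v=G\setminus(\mathcal{S}\setminus\{v\})$ via Lemma~\ref{lem:linALgTW}), introduces a separate $q$-expansion \emph{rewiring} rule: it builds a hitting set $H_v$ avoiding $v$ (Lemma~\ref{lem:specializedhitset}), applies the \qel{} with $A=H_v$ and $B$ the set of components of $G\setminus H_v$ adjacent to $v$, and then deletes the edges from $v$ into the components of $T$ while adding $c$ parallel edges from $v$ to each vertex of $S$ --- a graph modification whose safeness needs its own argument (Lemma~\ref{lem:q-expansioncorrectness}). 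Your proposal contains no analogue of this step, and without it the degree bound cannot be achieved.

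Your invocation of the \qel{} also does not deliver what you need even in the case you do treat. The lemma guarantees only \emph{nonempty} sets $S$ and $T$; no choice of the constant $c_2$ forces $|S|\ge k+1$, so you cannot conclude that the Flower Rule fires. Moreover, with $A=\mathcal{C}$ a family of ``return structures'', the $c$-stars make the private neighbor sets in $T$ disjoint but say nothing about the structures in $S$ being pairwise vertex-disjoint, so the resulting minor-models need not form a flower at all. (The paper sidesteps both issues by putting the hitting set $H_v$ on the $A$-side and the \emph{components} of $G\setminus H_v$ --- which are disjoint by construction --- on the $B$-side, and by using the output for rewiring rather than flower detection.) Finally, the enumeration of $\mathcal{C}$ and the matching computation are attributed to Lemma~\ref{lem:linALgTW}, which computes hitting sets and maximum flowers on bounded-treewidth graphs but does not enumerate minimal connected return structures; that part of your construction is not supported by the cited lemma.
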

\begin{proof}
Given an instance $(G,k)$ of  $p$-{\sc \jc{}-Deletion}, we first apply 
Lemma~\ref{thm:epconstructive} on $(G,k)$. The polynomial time algorithm described in Lemma~\ref{thm:epconstructive}, 
given a graph $G$ and a positive integer $k$ either reports that $G$ has no \jc{}-hitting set of size at most $k$, or 
finds a \jc{}-hitting set of size at most $k^*=O(k\log^{3/2} k)$. If the algorithm reports that $G$ has has no \jc{}-hitting set 
of size at most $k$, then we return that $(G,k)$ is a {\sc NO}-instance to $p$-{\sc \jc{}-Deletion}. So we assume that 
we have a hitting set $\mathcal{S}$ of size $k^*$. Now we proceed with  the following two rules.

\medskip
\noindent\textbf{Selective Flower Rule.} To apply the Flower Rule selectively we use  $\mathcal{S}$, the 
\jc{}-hitting set. For a vertex $v \in \mathcal{S}$ let  $\mathcal{S}_v := \mathcal{S} \setminus \{v\}$ and let $G_v := G\setminus \mathcal{S}_v$. 
By a result of Robertson et. al.~\cite{RobertsonST94} we
know that any graph of treewidth greater than  $20^{2c^5}$ contains a 
$c\times c$ grid, and hence \jc{}, as a minor. 
Since deleting $v$ from $G_v$ makes it \jc{}-minor-free, $\tw(G_v)\leq  20^{2c^5}+1 = O(1)$. Now by Lemma~\ref{lem:linALgTW},  we find in linear time the size of the largest flower centered at $v$, in $G_v$. If for any vertex $v\in \mathcal{S}$ the size of the flower in $G_v$ is 
at least $k+1$, we apply the Flower Rule and get an equivalent instance  $(G \leftarrow G\setminus \{v\},k \leftarrow k-1)$. Furthermore, we 
set $\mathcal{S}:=\mathcal{S}\setminus \{v\} $. We apply the Flower Rule selectively until no longer possible. We abuse notation and continue to use $(G,k)$ to refer to the instance that is reduced with respect to exhaustive application of the 
Selective Flower Rule. Thus, for every vertex $v \in \mathcal{S}$ the size of any flower passing through $v$ in $G_v$ is  at most $k$.  


Now we describe how to find, for a given  $v \in V(G)$, a hitting set $H_v \subseteq V(G) \setminus \{v\}$ for all minor-models of \jc{} that contain $v$. 
Notice that this hitting set is required to {\em exclude} $v$, so $H_v$ cannot be the trivial hitting set $\{v\}$. 
If $v \notin \mathcal{S}$, then $H_v = \mathcal{S}$. On the other hand, suppose $v \in \mathcal{S}$. 
Since the maximum size of a flower containing $v$ in the graph $G_v$ is at most $k$ by Lemma~\ref{lem:specializedhitset}, we 
can find a set $T_v$ of size $O(k)$ that does not contain  $v$ and hits all the \jc{} minor-models passing through $v$ in $G_v$. Hence in this case 
we set $H_v=\mathcal{S}_v \cup T_v$ (See Figure~\ref{fig:flower-rule}.). 
We denote $|H_v|$ by $h_v$. Notice that $H_v$ is defined algorithmically, that is, there could be many small hitting sets in $V(G) \setminus \{v\}$ hitting all minor-models containing $v$, and $H_v$ is one of them.

\begin{figure}[h]
\begin{center}
 \includegraphics[scale=0.7,bb=0 0 320 180]{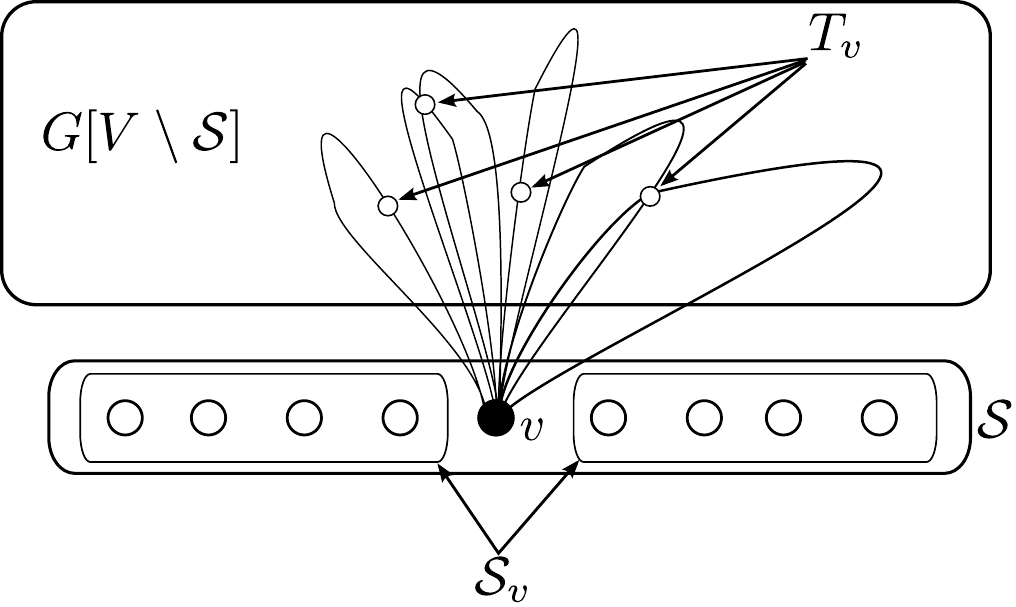}
 \caption{\label{fig:flower-rule} The hitting set in Selective
 Flower Rule}
\end{center}
\end{figure}

\paragraph{$q$-expansion Rule with $q=c$.} Given an instance $(G,k)$, $\cal S$, and a family of sets $H_v$,  we show that 
if there is a vertex $v$ with degree more than $ch_v+c(c-1)h_v$, then we can reduce its degree to at most $ch_v+c(c-1)h_v$ 
by repeatedly  applying the \qel{}  with $q=c$. Observe that for every  vertex $v$ the set $H_v$ is also a \jc{} hitting set for $G$, that is, $H_v$ hits {\em all} minor-models of \jc{} in $G$. Consider the graph $G \setminus H_v$. Let the components of this graph that contain a neighbor of $v$ be $C_1, C_2, \ldots, C_r$. Note that $v$ cannot have more than $(c-1)$ neighbors into any component, else contracting the component will form a \jc{} minor and will 
contradict the fact that $H_v$ hits all the \jc{} minors. 
Also note that none of the $C_i$'s can contain a minor model of \jc{}.

We say that a component $C_i$ is adjacent to $H_v$ if there exists a vertex $u\in C_i$ and $w\in H_v$ such that $(u,w)\in E(G)$. Next we show that  vertices in components that are not adjacent to $H_v$ are irrelevant in $G$. Recall a vertex is irrelevant if there is no minimal minor model of \jc{} that contains  it. Consider a vertex $u$ in a component $C$ that is not adjacent to $H_v$. Since $G[V(C)\cup \{v\}]$ does not contain any \jc{} minor we have that  if $u$ is a part of a minimal minor model $M \subseteq G$, then $v \in M$ and also there exists  a vertex $u'\in M$ such that $u' \notin C \cup \{v\}$. Then the removal of $v$ disconnects $u$ from $u'$ in $M$, a contradiction to Observation~\ref{obs:no-cut-pendant-vertex} 
that for $c\ge2$, any minimal \jc{} minor model $M$ of a graph $G$ does not contain a cut vertex.
Applying the Irrelevant Vertex Rule to the vertices in all such components leaves 
us with a new set of components $D_1, D_2, \ldots , D_s$, such that for every $i$, in $D_i$, there is at least one vertex that is adjacent to a vertex in $H_v$. 

As before, we continue to use $G$ to refer to the graph obtained after the Irrelevant Vertex Rule has been applied in the context described above. We also update the sets $H_v$ for $v\in V(G)$ by deleting all the vertices $w$ from these sets those have been removed using Irrelevant Vertex Rule.

Now, consider a bipartite graph $\cal G$ with vertex bipartitions $H_v$ and $D$. Here $D = \{d_1, \ldots, d_s\}$ contains a vertex $d_i$ corresponding to each component $D_i$. We add an edge $(v,d_i)$ if there is a vertex $w\in D_i$ such that $\{v,w\}\in E(G)$.  
Even though we start with a simple graph (graphs without parallel edges) it is possible that   after applying reduction rules parallel edges may appear. However, throughout the algorithm, we ensure that the number of parallel edges between any pair of vertices is at most $c$. 
Now, $v$ has at most $ch_v$ edges to vertices in $H_v$. Since $v$ has at most $(c-1)$ edges to each $D_i$, it follows that if $d(v) >  ch_v+c(c-1)h_v$, 
then the number of components $|D|$ is more than $ch_v$. 
Now by applying   \qel{} with $q=c$, $A=H_v$, and $B=D$, we  find a subset $S \subseteq H_v$ and $T \subseteq D$ such that $S$ has $|S|$ $c$-stars in $T$ and $N(T)=S$. 

The reduction rule involves deleting edges of the form $(v,u)$ for all $u \in D_i$, such that $d_i \in T$, and adding $c$ edges between $v$ and $w$ for all $w \in S$. We 
add these edges only if they were not present before so that the number of egdes between any pair of vertices remains at most $c$. 
This completes the description of the 
  $q$-expansion reduction rule with $q=c$.   Let $G_R$ be the graph obtained after 
applying the reduction rule.  The following lemma shows the correctness of the rule. 
\begin{lemma}
\label{lem:q-expansioncorrectness}%
Let $G$, $S$ and $v$ be as above and $G_R$ be the graph obtained after applying the $c$-expansion rule. 
Then $(G,k)$ is an yes instance of $p$-{\sc \jc{}-Deletion} if and only if $(G_R,k)$ is an yes instance of $p$-{\sc \jc{}-Deletion}. 
\end{lemma}

\begin{proof}
We first show that if $G_R$ has hitting set $Z$ of size at most $k$, then the same hitting set $Z$ hits all the minor-models of \jc{} in $G$. Observe that either  $v\in Z$ or $S\subseteq Z$. Suppose $v\in Z$, then observe that
$G_R\setminus \{v\}$ is the same as $G\setminus \{v\}$. Therefore  $Z\setminus \{v\}$, a 
hitting set of $G_R\setminus \{v\}$ is also a hitting set of $G\setminus \{v\}$. This shows that $Z$ is a hitting set of size at most $k$ of $G$. The case when $S\subseteq Z$ is similar. 

To prove that a hitting set of size at most $k$ in $G$ implies a hitting set of size at most $k$ in $G_R$, it suffices to prove that whenever there is a hitting set of size at most $k$, there also exists a hitting set of size at most $k$ that contains either $v$ or all of $S$. Consider a hitting set $W$ that does not contain $v$, and omits at least one vertex from $S$. Note the $|S|$ $c$-stars in $\cal{G}[S\cup T]$, along with $v$, correspond to minor-models of \jc{} centered at $v$ in $G$, vertex-disjoint except for $v$. Thus, such a hitting set must pick at least one vertex from one of the components. Let $\mathcal{D}$ be the collection of components $D_i$ such that the (corresponding) vertex $d_i \in T$. Let $X$ denote the set of all vertices of $W$ that appeared in any $D_i \in \mathcal{D}$. Consider the hitting set $W'$ obtained from $W$ by removing $X$ and adding $S$, that is, $W' := (W \setminus X) \cup S$.

We now argue that $W'$ is also a hitting set of size at most $k$. Indeed, let $S'$ be the set of vertices in $S$ that do not already belong to $W$. Clearly, for {\em every} such vertex that $W$ omitted, $W$ must have had to pick distinct vertices from $\mathcal{D}$ to hit the \jc{} minor-models formed by the corresponding $c$-stars. Formally, there exists a $X' \subseteq X$ such that there is a bijection between $S'$ and $X'$, implying that $|W'| \leq |W| \leq k$. 

Finally, observe that $W'$ must also hit all minor-models of \jc{} in $G$. If not, there exists a minor-model $M$ that contains some vertex $u \in X$. Hence, $u \in D_i$ for some $i$, and $M$ contains some vertex in $H_v \setminus S$. However, $v$ separates $u$ from $H_v \setminus S$ in $G \setminus S$, contradicting Observation~\ref{obs:no-cut-pendant-vertex} that $M$ does not contain a cut vertex. This concludes the proof.
\end{proof}

Observe that all edges that are added during the application of the $q$-expansion reduction rule have at least one end point in $\cal{S}$, and hence $\cal{S}$ remains a hitting set of $G_R$. We are now ready to summarize the algorithm that bounds the degree of the graph (see Algorithm~\ref{alg:bdd-degree}).

\begin{algorithm}[h]
\caption{{\sc Bound-Degree}$(G,k,\cal{S})$}
\label{alg:bdd-degree}
\begin{algorithmic}[1]
\STATE Apply the Selective Flower Rule 
\IF {$\exists v \in V(G)$ such that $d(v) >  ch_v+c(c-1)h_v$}
\STATE Apply the $q$-expansion reduction rule with $q=c$. 
\ELSE
\STATE Return $(G,k,\cal{S})$.
\ENDIF
\STATE Return {\sc Bound-Degree}$(G,k,\cal{S})$.
\end{algorithmic}
\end{algorithm}        

Let the instance output by Algorithm \ref{alg:bdd-degree} be  $(G',k',\cal{S})$.
Clearly, in $G'$, the degree of every vertex is at most $ch_v+c(c-1)h_v \leq O(k\log^{3/2} k)$. The routine also returns $\cal S$ --- a \jc{}-hitting set of $G'$ of size at most $O(k \log^{3/2} k)$.

We now show that the algorithm  runs in polynomial time. For $x \in V(G)$, let $\nu(x)$ be the number of neighbors of $x$ to which $x$ has fewer than $c$ parallel edges. Observe that the application of $q$-expansion reduction rule never increases $\nu(x)$ for any vertex and decreases $\nu(x)$ for at least one vertex. The other rules delete vertices, which can never increase $\nu(x)$ for any vertex. This concludes the proof.
\end{proof}

\subsection{Analysis and Kernel Size -- Proof of Theorem~\ref{thm:thetackernel}}
In this section we give the desired kernel for \tfd.
\begin{proof}[\bf Proof of Theorem~\ref{thm:thetackernel}]
Let  $(G,k)$ be an instance to \tfd{}. We first bound the maximum degree of the graph by applying Lemma~\ref{lem:maxdegreebound} on $(G,k)$. If 
Lemma~\ref{lem:maxdegreebound}  returns that $(G,k)$ is a NO-instance to \tfd{} then 
we return the same. Else we obtain an equivalent instance $(G',k')$ such that $k'\leq k$ 
and the maximum degree of $G'$ is bounded by $O(k \log^{3/2} k)$.  Moreover 
it also returns a \jc{}-hitting set, $X$, of $G'$ of size at most $O(k \log^{3/2} k)$.
Let $d$ denote the treewidth of the graph after the removal of $X$, that is, $d := \tw(G \setminus X)$.
 
Now, we obtain our kernel in two phases: we first apply the protrusion rule 
selectively (Lemma~\ref{lem:red2finiteindex}) and get a polynomial kernel. Then, we apply the protrusion 
rule exhaustively on the obtained kernel to get a smaller kernel.  To obtain the kernel we follow the following steps. 
\paragraph{\sl Applying the Protrusion Rule.}
By a result of Robertson et. al.~\cite{RobertsonST94} we
know that any graph of treewidth greater than  $20^{2c^5}$ contains a 
$c\times c$ grid, and hence \jc{}, as a minor. 
Hence $d \leq 20^{2c^5}$. 
Now we apply Lemma~\ref{lem:prottfd} and get a $2(d+1)$-protrusion $Y$ of $G'$ of size at least 
$\frac{|V(G')|-|X|}{4|N(X)|+1}$.  By Lemma~\ref{lem:jcisfii}, \tfd{} has finite integer index. Let 
 $\gamma : \mathbb{N} \rightarrow \mathbb{N}$  be the function defined in Lemma~\ref{lem:red2finiteindex}. 
 Hence if $\frac{|V(G')|-|X|}{4|N(X)|+1}\geq \gamma(2d+1)$ then using Lemma~\ref{lem:red2finiteindex} we 
replace the  $2(d+1)$-protrusion $Y$  of $G'$ and obtain an instance $G^*$ such that  $|V(G^*)| < |V(G')|$, $k^* \leq k'$, and $(G^*,k^*) $ is a YES-instance of \tfd{} if and only if $(G',k') $ is a YES-instance of \tfd{} . 

Before applying the Protrusion Rule again, if necessary, we bound 
the maximum degree of the graph by reapplying Lemma~\ref{lem:maxdegreebound}. This is done because the application of the protrusion rule could potentially increase the maximum degree of the graph. 
We alternately apply the protrusion rule and Lemma~\ref{lem:maxdegreebound} in this fashion, until either Lemma~\ref{lem:maxdegreebound} returns that $G$ is a NO instance, or the protrusion rule ceases to apply.
Observe that this process will 
always terminate as the procedure that bounds the maximum degree never increases the number of vertices 
and the protrusion rule always reduces the number of vertices. 

Let $(G^*,k^*)$ be a reduced instance with hitting set $X$. In other words, there is no
$(2d+2)$-protrusion of size $\gamma(2d+2)$ in $G^*\setminus X$, and the protrusion rule no longer applies.
Now we show that the 
number of vertices  and edges of this graph is bounded by $O(k^2 \log^3 k)$. We first bound the number of 
vertices. Since we cannot apply the Protrusion Rule, $\frac{|V(G^*)|-|X|}{4|N(X)|+1}
\leq \gamma(2d+2)$.  Since $k^*\leq k$ this implies that 
\begin{eqnarray*}
|V(G^*)|& \leq & \gamma(2d+2)(4|N(X)|+1)+|X|\\
& \leq &\gamma(2d+2)(4|X|\Delta(G^*)+1)+|X| \\
& \leq & \gamma(2d+2)(O(k \log^{3/2} k) \times O(k \log^{3/2} k)+1)+O(k \log^{3/2} k)\\
& \leq & O(k^2 \log^{3} k).
\end{eqnarray*}
To get the desired bound on the number of edges we first observe that since 
$\tw(G^* \setminus X) \leq 20^{2c^5}=d$, we have that the number of edges in 
$G^* \setminus X \leq d |V(G^*) \setminus X|=O(k^2 \log^3 k).$  Also the number of edges 
incident on the vertices in $X$ is at most  $|X|\cdot \Delta(G^*)\leq O(k^2 (\log k)^{3})$.  This gives us 
a polynomial time algorithm that returns a kernel of size  $O(k^2 \log^3 k)$. 

Now we give a kernel of smaller size. To do so we apply combination of rules to bound the degree and the protrusion rule as before. The only difference is that we would like to replace any large $(2d+2)$-protrusion in graph by a smaller one.   
We find a $2d+2$-protrusion $Y$  of size at least $\gamma(2d+2)$ by guessing the boundary $\partial(Y)$ of size at 
most $2d+2$. This could be performed in time $k^{O(d)}$.  So let $(G^*,k^*)$ be the reduced instance on which we can not apply the Protrusion Rule. Then we know that $\Delta(G^*)=O(k \log^{3/2} k)$. If $G$ is a YES-instance then there exists a \jc{}-hitting set $X$ of size at most $k$ such that $\tw(G \setminus X) \leq 20^{2c^5}=d$. Now applying the analysis above with this $X$  yields that $|V(G^*)|=O(k^2 \log^{3/2} k)$ and $|E(G^*)|\leq O(k^2 \log^{3/2} k)$. 
Hence if the number of vertices or edges in the reduced instance $G^*$, to which we can not apply the Protrusion Rule, 
is more than  $O(k^2 \log^{3/2}k)$  then we return that $G$ is a NO-instance. This concludes the proof of the theorem.
\end{proof}

Theorem~\ref{thm:thetackernel} has following immediate corollary. 
\begin{corollary}
{\sc $p$-Vertex Cover} , {\sc $p$-Feedback Vertex Set}  and {\sc $p$-Diamond Hitting Set}  have kernel of size 
$O(k^2 \log^{3/2}k)$. 
\end{corollary}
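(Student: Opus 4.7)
The plan is to observe that each of the three named problems is a special case of \tfd{} for a specific small value of $c$, and then apply Theorem~\ref{thm:thetackernel} directly. The paper itself highlights these correspondences in the introduction: \jc{} for $c=1$ is $K_2$, for $c=2$ it consists of two vertices joined by a pair of parallel edges, and for $c=3$ it is the ``diamond'' graph.

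First I would handle \textsc{$p$-Vertex Cover}. A set $S$ is a vertex cover of $G$ exactly when $G\setminus S$ contains no edge, which is the same as saying $G\setminus S$ contains no $K_2=\theta_1$ as a (sub)graph, and hence no $\theta_1$ as a minor. So \textsc{$p$-Vertex Cover} is the \tfd{} problem for $c=1$. Next, for \textsc{$p$-Feedback Vertex Set}, I would argue that a graph contains $\theta_2$ as a minor if and only if it contains a cycle: contracting a spanning tree of a cycle down to two vertices yields a $\theta_2$, while conversely any $\theta_2$-minor model contains two internally vertex-disjoint paths between a pair of branch vertices, which together form a cycle. Hence $S$ is a feedback vertex set of $G$ iff $S$ is a $\{\theta_2\}$-hitting set, identifying \textsc{$p$-Feedback Vertex Set} with \tfd{} for $c=2$. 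Finally, \dhs{} is by definition \tfd{} with $c=3$, as noted in the introduction.

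In each of the three cases, the obstruction family $\cal F$ contains \jc{} (it is in fact equal to $\{\jc{}\}$), so the hypothesis of Theorem~\ref{thm:thetackernel} is satisfied. Applying that theorem therefore yields a kernel of size $O(k^2\log^{3/2} k)$ for each of the three problems, proving the corollary.

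There is no real obstacle here: the corollary is a direct instantiation of Theorem~\ref{thm:thetackernel} for $c\in\{1,2,3\}$, and the only thing to verify is the (well-known) translation between the vertex-deletion problems and the corresponding \jc{}-minor-hitting formulations, which is immediate from the definitions.
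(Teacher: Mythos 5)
Your proposal is correct and matches the paper's intent exactly: the paper states this corollary as an immediate consequence of Theorem~\ref{thm:thetackernel}, relying on the correspondences (noted in its introduction) between $\theta_1$-, $\theta_2$-, and $\theta_3$-deletion and \textsc{Vertex Cover}, \textsc{Feedback Vertex Set}, and \dhs{} respectively. Your explicit verification of these correspondences is the only content of the proof, and it is carried out correctly.
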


 \section{Conclusion}\label{section:conclision}
 In this paper we gave the first kernelization algorithms for a
 subset of \fd{} problems and a generic approximation algorithm
 for the \fd{} problem when the set of excluded minors  $\cal
 F$ contains at least one planar graph. Our approach generalizes
 and unifies known kernelization algorithms for {\sc $p$-Vertex
   Cover} and {\sc $p$-Feedback Vertex Set}.  By the celebrated
 result of Robertson and Seymour, every \fd{} problem is FPT and
 our work naturally leads to the following question: does every
 \fd{} problem have a polynomial kernel?  Can it be that for some
 finite sets of minor obstructions $\mathcal{F}=\{O_1,\dots,
 O_p\}$   the answer to this
 question is NO?  Even the case
 $\mathcal{F}=\{K_5, K_{3,3}\}$, vertex deletion to  planar
 graphs, is an interesting challenge.  Another interesting
 question is if our techniques can be extended to another
 important case  when   $\mathcal{F}$ contains a planar graph.


\end{document}